\newcommand{\R}{\mathbb{R}}
\DeclareMathOperator{\diag}{diag}
\DeclareMathOperator{\rank}{rank}
\DeclareMathOperator{\colsp}{colsp}
\DeclareMathOperator{\argmin}{argmin}
\newcommand{\MO}{\mathcal{O}}
\newcommand{\sigm}{\sigma_{\rm m}}
\newcommand{\sigsecm}{\sigma_{\rm 2}}
\newcommand{\hb}{\mathbf{h}}
\newcommand{\xb}{\mathbf{x}}
\newcommand{\yb}{\mathbf{y}}
\newcommand{\zb}{\mathbf{z}}
\newcommand{\Hb}{\mathbf{H}}
\newcommand{\Lb}{\mathbf{L}}
\newcommand{\Ib}{\mathbf{I}}
\newcommand{\Mb}{\mathbf{M}}
\newcommand*\mcup{\mathbin{\mathpalette\mcupinn\relax}}
\newcommand*\mcupinn[2]{\vcenter{\hbox{$\mathsurround=0pt
  \ifx\displaystyle#1\textstyle\else#1\fi\bigcup$}}}
  \newcommand{\mU}{\mathscr{U}}
\newtheorem{theorem}{Theorem}
\newtheorem{definition}{Definition}
\newtheorem{lemma}{Lemma}
\newtheorem{assumption}{Assumption}
\title{\bf Network Flows that Solve  Least Squares \\ for Linear Equations\thanks{A preliminary version \cite{yangcdc2017} of this work was presented at the 56th IEEE Conference on Decision and Control, December 12-15, 2017 in Melbourne, Australia.}}
\date{}
\author{Yang Liu, Youcheng Lou, Brian D. O. Anderson, Guodong Shi\thanks{Y. Liu and G. Shi are with the Research School of Engineering, The Australian National University, Canberra 0200, Australia. (Email: yang.liu@anu.edu.au, guodong.shi@anu.edu.au)}%
\thanks{Y. Lou is with MDIS, Academy of Mathematics and Systems Science, Chinese Academy of Sciences, Beijing 100190, China. (Email: louyoucheng@amss.ac.cn)}
\thanks{Brian D. O. Anderson is with the Research School of Engineering, The Australian National University, Canberra 0200, Australia; Hangzhou Dianzi University, Hangzhou 310018, China; Data61-CSIRO, Canberra 0200, Australia. (Email: brian.anderson@anu.edu.au)}
 }
\begin{document}
\maketitle

\begin{abstract}
This paper presents a first-order {distributed continuous-time algorithm} for computing the least-squares solution to a linear equation over networks. Given the uniqueness of the solution{, with} nonintegrable {and diminishing }step size, convergence results are provided for fixed graphs. {The exact rate of convergence is also established for various types of step size choices falling into that category.} For {the case where non-unique solutions exist}, convergence {to one such solution} is proved for constantly connected switching graphs {with square integrable step size}, and for uniformly jointly connected switching graphs under the boundedness assumption on system states. Validation of the results and illustration of the impact of step size on the convergence speed are made using a few numerical examples.
\end{abstract}

\section{Introduction}\label{sec:intro}
In modern engineering systems, there is a great demand for large-scale computing capabilities for solving real-world mathematical problems. Centralized algorithms are effective tools if the computing center possesses the information of the entire problem. In some cases, however, due to the comparatively weak computing power of any one agent or its limited access to the parameters and measurement data relevant to the whole problem, the notion of distributed computation over networks has been developed \cite{tsitsiklis1984,tsitsiklis1986,lynch1996,jadbabaie2003,rabbat2005,mesbahi2010}. Nowadays it is widely applied in the areas of analyzing the consensus of complex systems \cite{olfati2004consensus}, solving various optimization problems \cite{nedic2009distributed1}, carrying out distributed estimation \cite{cattivelli2008diffusion} and filtering \cite{kar2011gossip}.

Solving systems of linear equations using distributed algorithms over networks
{emerges as one of the basic tasks in distributed computation.} In these scenarios, it is often assumed that each agent of the network only has access to one or a few of the individual linear equations making up the full system due to security issues or memory limitation, and is only permitted to interact with a subset of the other agents.
A number of contributions have been made to the development of distributed solvable linear equation solvers, where simple first-order distributed algorithms, in continuous-time or discrete-time \cite{anderson2015decentralized,shi2016network,lu2009distributed1,liu2013asynchronous,liu2014,mou2013fixed,mou2015distributed,wang2014solving,wang2016distributed}, manage to deliver satisfactory solutions even for switching network structures. As is known to all, however, another frequent case in practical problems is concerned with non-solvable linear equations, in which we often seek a least-squares solution by minimizing the associated objective function.

However, it seems a rather challenging problem in developing distributed least-squares solvers for network linear equations, due to the mismatch between individual linear equations at each node and the network least-squares solution. Despite the difficulties, there exist a few distributed algorithms developed for the least-squares problem using different approaches, such as second-order algorithms \cite{wang:elia:10,wang2012distributed,gharesifard2014distributed,yang2017exponential}, state expansion \cite{mou2015distributed} and the high gain consensus gain method \cite{shi2016network}. Second-order distributed least-squares solvers \cite{wang:elia:10,wang2012distributed,gharesifard2014distributed,yang2017exponential} generally can produce good convergence performance, however, they rely on restricted network structures and demand higher communication and storage capacities. The state expansion method \cite{mou2015distributed} is based on enlarging the state dimension and then applying the existing methods for linear equations with exact solutions directly, but a negative feature is that the nodes must have access to more knowledge than their own linear equations. It was shown in \cite{shi2016network} that first-order algorithms for exact solutions can be adapted to the least-squares case by a high consensus gain, but only in an approximate sense.

In this paper, we propose a first-order continuous-time flow for the least-squares problems of network linear equations, in which each agent keeps averaging the state with its neighbors' and at the same time descends along the negative gradient of its local cost function. This flow is inspired by the work of \cite{nedic2015distributed} on distributed subgradient optimization. If the network linear equation has one unique least-squares solution, we prove that all node states asymptotically converge to that solution along our flow, with constant and connected graphs and a step size tending to zero, but not too fast. {We also give analytical results on how the choice of step size, the attributes of linear equations and network size affect the convergence speed.} For a switching network structure that is at all times connected, we show that the node states always converge to one of the least-squares solutions with square integrable step size. The same convergence result is shown to hold for a uniformly jointly connected switching network under a boundedness assumption on the system states. We also provide a few numerical examples that validate the usefulness of the proposed algorithms and demonstrate the convergence rate.

{A preliminary version of this work \cite{yangcdc2017} was presented at the 56th IEEE Conference on Decision and Control. Compared to the conference version, we make additional contributions as follows: (i) analytical  studies on the rate of convergence of the proposed algorithm are provided; 
(ii) convergence results are  stated under a common structure for all network and linear equation scenarios, in addition to the detailed proofs; (iii) more numerical validations are presented. The remainder of this paper is organized as follows. In Section \ref{sec:pre}, a brief introduction to the definition of the problem studied is given. We present the main results in Section \ref{sec:results} and provide their detailed proofs in Section \ref{sec:proofs}. We also provide validations and further discussions using numerical examples in Section \ref{sec:sims}. In Section \ref{sec:conclu}, the main work of this paper is summarized and potential future work directions are provided.

\section{Problem Definition}\label{sec:pre}
In this section, a few mathematical preliminaries are provided, regarding linear equations over networks. Also we establish a distributed network flow that can asymptotically compute the least-squares solution to network linear equations and discuss its relation to existing work.
\subsection{Linear Equations}
Consider the following linear algebraic equation with respect to $\yb\in \R^m$
\begin{equation} \label{eq:linear_equation}
\zb = \Hb \yb{,}
\end{equation}
where $\zb\in\R^N$ and $\Hb\in\R^{N\times m}$ are known and satisfy $N\ge m$. Denote
\[
\Hb =
\begin{bmatrix}
\hb_1^\top \\
\hb_2^\top \\
\vdots \\
\hb_N^\top
\end{bmatrix},
\quad
\zb =
\begin{bmatrix}
z_1 \\
z_2 \\
\vdots \\
z_N
\end{bmatrix}
\]
with {$\hb_i\in\R^m$ for all $i=1,\dots,N$}. We can rewrite (\ref{eq:linear_equation}) as
\[
\hb_i^\top \yb = z_i,\ i=1,\dots,N.
\]
Denote the column space of a matrix $\Mb$ by $\colsp\{\Mb\}$. If $\zb\in\colsp\{\Hb\}$, then the equation (\ref{eq:linear_equation}) always has (one or many) exact solutions. If $\zb\notin\colsp\{\Hb\}$, the least-squares solution is defined by the solution of the following optimization problem:
\begin{equation}\label{eq:leastSquare_constaint}
\min_{\yb\in\R^m} \|\zb-\Hb\yb\|^2.
\end{equation}

It is well known that if $\rank(\Hb)=m$, then (\ref{eq:leastSquare_constaint}) yields a unique solution $\yb^\ast=(\Hb^\top \Hb)^{-1}\Hb^\top\zb$, while (\ref{eq:leastSquare_constaint}) has a set of non-unique least-squares solutions if $\rank(\Hb)<m$. Define
\begin{equation}\notag
f(\yb)=\|\zb-\Hb\yb\|^2=\sum\limits_{i=1}^{N}f_i(\yb),
\end{equation}
where $f_i(\yb)=|\hb_i^\top\yb-z_i|^2$. Note that $\yb^\ast\in\argmin f(\yb)$, i.e., $\nabla f(\yb^\ast)=0$, where $\nabla f(\yb) = 2\sum\limits_{i=1}^{N}(\hb_i\hb_i^\top\yb-z_i\hb_i)$.

\subsection{Networks}
Let $\mathcal{G}=(\mathcal{V},\mathcal{E})$ denote a constant, undirected and simple graph with the finite set of nodes $\mathcal{V}=\{1,2,\dots,N\}$ and the set of edges $\mathcal{E}=\big\{\{i,j\}:i,j\in\mathcal{V}\textnormal{ are connected}\big\}$. {Let $\R^+,\R^{\ge 0}$ denote the sets of all positive real numbers and nonnegative real numbers, respectively.} Define a weight function $w:\mathcal{E}\to\R{^+}$ over the edge set with the weight of edge $\{i,j\}$ being $w(\{i,j\})$. {It is worth noting the weight $w$ for each edge is assumed to be fixed in this paper for ease of the presentation. Generalizations to time-varying weights can be made similarly to the analysis of \cite{shi2016network}.} Based on constant graphs, we next introduce time-varying graphs. Let $\mathcal{Q}$ be the set containing all possible constant and undirected graphs induced by the node set $\mathcal{V}$ and let $\mathcal{Q}^\ast\subset\mathcal{Q}$ be a subset of $\mathcal{Q}$. Define a piecewise constant mapping $\mathcal{G}_{\sigma}=(\mathcal{V},\mathcal{E}_{\sigma}):\R^{\ge 0}\to\mathcal{Q}^\ast$. {Throughout this paper, we assume the set of times corresponding to discontinuities of $\mathcal{G}_{\sigma(t)}$ has measure zero.}
Note that the time-varying graph $\mathcal{G}_{\sigma(t)}=(\mathcal{V},\mathcal{E}_{\sigma(t)})$ represents the network topology at time $t$. Let $\mathcal{N}_i(t)$ be the set of neighbor nodes that are connected to node $i$ at time $t$, i.e., $\mathcal{N}_i(t)=\big\{j:\{i,j\}\in\mathcal{E}_{\sigma(t)}\big\}$. Define the adjacency matrix $\mathbf{A}(t)$ of the graph $\mathcal{G}_{\sigma(t)}$ by $[\mathbf{A}(t)]_{ij}=w(\{i,j\})$ if $\{i,j\}\in\mathcal{E}_{\sigma(t)}\}$, and $[\mathbf{A}(t)]_{ij}=0$ otherwise, and $\mathbf{D}(t)=\diag(\sum\limits_{j=1}^N[\mathbf{A}(t)]_{1j},\dots,\sum\limits_{j=1}^N[\mathbf{A}(t)]_{Nj})$. Then $\mathbf{L}(t)=\mathbf{D}(t)-\mathbf{A}(t)$ is the Laplacian of graph $\mathcal{G}_{\sigma(t)}$ at time $t$.


\subsection{Distributed Flows}
Assume that node $i$ of the network $\mathcal{G}_{\sigma(t)}$ only knows the information of $\hb_i,z_i$, i.e., node $i$ is associated with the linear equation $\hb_i^\top\yb=z_i$. We associate with each node $i$ a state $\xb_i(t)\in\R^m$, which{, as the notation implies,} in general varies with time. Then we propose the following continuous-time network flow
\begin{equation}\label{fixed}
  \dot{\xb}_i(t) = K\sum\limits_{j\in\mathcal{N}_i(t)}[\mathbf{A}(t)]_{ij}(\xb_j(t)-\xb_i(t)) - \frac{\alpha(t)}{2}\nabla f_i(\xb_i(t)),
\end{equation}
where $K\in\R^+$ is a positive constant, $\nabla f_i(\yb)=2(\hb_i\hb_i^\top\yb-z_i\hb_i)$ and the step size $\alpha:\R^{\geq 0}\to\R^+$ is a continuous function which assures the continuity of all $\xb_i(t)$ and their derivatives, with the exception of the time points when the networks switch. In vector form, we have
\begin{equation}\label{eq:x_ori}
\dot{\xb}(t) = -\Mb(t)\xb(t)+\alpha(t)\zb_H,
\end{equation}
where
{\begin{equation}\notag
\begin{aligned}
\xb(t) &= \begin{bmatrix}
\xb_1(t)^\top & \dots & \xb_N(t)^\top
\end{bmatrix}^\top,\\
\Mb(t) &= K(\Lb(t)\otimes\Ib_m)+\alpha(t)\tilde{\Hb},\\
\tilde{\Hb} &= \diag\big(\hb_1\hb_1^\top,\dots,\hb_N\hb_N^\top\big),\\
\zb_H &= \begin{bmatrix}
z_1\hb_1^\top & \cdots & z_N\hb_N^\top
\end{bmatrix}^\top.
\end{aligned}
\end{equation}}
Now we make several assumptions of $\alpha(t)$ that will be used in our main results.
\begin{assumption}\label{ass:1}
(i) $\int_{0}^\infty \alpha(t)dt =\infty$; (ii) $\lim\limits_{t\to\infty}\alpha(t)=0$; (iii) $\int_{0}^\infty \alpha^2(t)dt <\infty$.
\end{assumption}

\subsection{Discussion}
Now we clarify the relation between the previous work on distributed least-squares and optimization algorithms, and our algorithm (\ref{fixed}) by briefly discussing their structure and applicability. {}It is clear that (\ref{fixed}) has exactly the same structure as the flow in \cite{nedic2010constrained,nedic2015distributed} {in the sense that they are both in the form of ``local averaging consensus" + ``diminishing local objective"}, with the difference that the flow in \cite{nedic2010constrained,nedic2015distributed} is discrete-time but (\ref{fixed}) is continuous-time. However, we cannot use the algorithm and the analysis directly because the gradient boundedness of (\ref{fixed}) is not directly verifiable. It can be noted that the first-order flow in \cite{shi2016network} is a special case of (\ref{fixed}) obtained by letting $\alpha(t)$ be some constant. Due to the existence of the diminishing step size, (\ref{fixed}) is a linear time-varying system, while the flow in \cite{shi2016network} is linear time-invariant and can only produce the solution in approximate sense. Hence the approach to analyzing the flow in \cite{shi2016network} is not applicable for (\ref{fixed}). {Indeed (\ref{fixed}) can be formulated by properly specializing the optimization problem in \cite{touri2015} and letting each agent's output scale be constant one. However, because of the specificity of the least-squares cost function, relaxed convergence conditions become possible as will be shown later.} {In addition, we will provide analytic results on the convergence speed for the fixed network case.} There are also second-order least-squares solvers \cite{wang:elia:10,wang2012distributed,gharesifard2014distributed,yang2017exponential}, but they often require limited network topologies and have more complex structures than (\ref{fixed}).

\section{Main Results}\label{sec:results}
In this section, we investigate the flow (\ref{eq:x_ori}) over fixed and switching networks, respectively, and establish the convergence conditions regarding $\alpha(t)$ and the graphs.\\
\noindent Proofs of the results appear in later subsections.

\subsection{Convergence over Fixed Networks}
First we consider the case where the linear equation (\ref{eq:linear_equation}) has one unique least-squares solution and the network is a constant graph for all $t$. In this case, the following theorem holds.

\begin{theorem}\label{thm:1} Let $\yb^\ast=(\Hb^\top \Hb)^{-1}\Hb^\top\zb$ denote the unique least-squares solution of (\ref{eq:linear_equation}) and suppose $\rank(\Hb)=m$. Let Assumption \ref{ass:1} (i) and (ii) hold. If $\mathcal{G}_{\sigma(t)}=\mathcal{G}$ is constant and connected for all $t\ge 0$, then along any solution of (\ref{fixed}) there holds
\[
\lim_{t\to \infty} \xb_i(t)=\yb^\ast
\]
for all $i\in\mathcal{V}$.
\end{theorem}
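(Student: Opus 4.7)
The plan is to decompose the ensemble state into a mean and a disagreement part, analyze them separately, and exploit the two time scales produced by the diminishing step size $\alpha(t)$. Set $\bar{\xb}(t) = \frac{1}{N}\sum_{i=1}^N \xb_i(t)$, $\eb(t) = \bar{\xb}(t) - \yb^\ast$, and $\deltab(t) = \xb(t) - \1_N\otimes\bar{\xb}(t)$. Averaging (\ref{fixed}) over $i$ (the consensus terms cancel) and invoking the optimality identity $\sum_i \hb_i(\hb_i^\top\yb^\ast - z_i) = \Hb^\top(\Hb\yb^\ast - \zb) = 0$, I derive
\[ \dot{\eb}(t) = -\frac{\alpha(t)}{N}\Hb^\top\Hb\,\eb(t) - \frac{\alpha(t)}{N}\sum_{i=1}^N \hb_i\hb_i^\top(\xb_i(t) - \bar{\xb}(t)). \]
Projecting (\ref{eq:x_ori}) onto the disagreement subspace using the orthogonal projector $\mathbf{P}^\perp = \Ib_N - \tfrac{1}{N}\1_N\1_N^\top$ yields
\[ \dot{\deltab}(t) = -K(\Lb\otimes\Ib_m)\deltab(t) + \alpha(t)(\mathbf{P}^\perp\otimes\Ib_m)(\zb_H - \tilde{\Hb}\xb(t)). \]

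Assuming $\xb(t)$ is bounded (addressed below), I treat the two subsystems in turn. The disagreement system is exponentially stable with rate $K\lambda_2(\Lb)>0$ (from connectedness of $\mathcal{G}$), with a perturbation of magnitude $O(\alpha(t))$. A Young's inequality estimate on $\|\deltab\|^2$ yields $\tfrac{d}{dt}\|\deltab\|^2 \leq -K\lambda_2(\Lb)\|\deltab\|^2 + C_1\alpha^2(t)$; since $\alpha(t)\to 0$, a standard convolution-splitting argument (truncate the mild-solution integral at a large $T$ past which $\alpha^2$ is arbitrarily small) gives $\|\deltab(t)\|\to 0$. For the mean, the rank condition $\rank(\Hb) = m$ ensures $\Hb^\top\Hb \succeq \lambda_m \Ib_m$ with $\lambda_m > 0$, and Young's inequality gives $\tfrac{d}{dt}\|\eb\|^2 \leq -\tfrac{\alpha(t)\lambda_m}{N}\|\eb\|^2 + C_2\,\alpha(t)\|\deltab(t)\|^2$. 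The time reparametrization $\tau(t) = \int_0^t \alpha(s)\,ds$---a bijection of $[0,\infty)$ onto itself, since $\alpha > 0$ and $\int\alpha = \infty$---converts this into an exponentially stable linear system in $\tau$-time driven by a vanishing input $\|\deltab(t(\tau))\|^2$, so $\eb(t)\to 0$. Combining, $\xb_i(t) = \yb^\ast + \eb(t) + \deltab_i(t) \to \yb^\ast$ for every $i$.

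The main obstacle is establishing the a-priori boundedness of $\xb(t)$, on which both limit arguments above depend. The naive candidate $V = \|\xb - \1_N\otimes\yb^\ast\|^2$ does not contract globally: the cross term $\alpha(t)\sum_i (\hb_i^\top\xb_i - \hb_i^\top\yb^\ast)(\hb_i^\top\yb^\ast - z_i)$ produces a forcing whose time integral is a multiple of $\int_0^\infty\alpha(s)\,ds = \infty$, which could in principle drive $V$ to infinity. My plan is to work with the composite Lyapunov function $U(t) = \|\eb(t)\|^2 + c\|\deltab(t)\|^2$ with a small weight $c>0$: by tuning $c$ and applying Young's inequality, the consensus-induced dissipation on $\deltab$ absorbs both the cross-coupling with $\eb$ and the residual forcing of order $\alpha(t)$ produced by the constants $r_i^\ast = \hb_i^\top\yb^\ast - z_i$. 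This should yield a differential inequality of the form $\dot U(t) \leq -c_1\alpha(t)U(t) + c_2\alpha(t)$ for all $t$ past some $t_0$ (chosen so that $\alpha(t)$ is sufficiently small); since $\int\alpha = \infty$, comparison gives $\limsup_{t\to\infty} U(t) \leq c_2/c_1 < \infty$, which is enough to close the argument.
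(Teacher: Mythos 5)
Your proposal is correct, and for the convergence part it follows essentially the paper's route: the same split into the average $\bar{\xb}(t)$ and the disagreement $\xb(t)-\1_N\otimes\bar{\xb}(t)$, constant-rate dissipation $-2K\sigsecm(\Lb)$ on the disagreement with a forcing that vanishes with $\alpha(t)$, and an $\alpha(t)$-rate contraction of the average using $\sigm(\Hb^\top\Hb)>0$ together with $\int_0^\infty\alpha=\infty$; your explicit time change $\tau(t)=\int_0^t\alpha(s)\,\mathrm{d}s$ is an unwrapped version of the paper's comparison result (Lemma \ref{lem2}), and using $\Hb^\top\Hb\succeq\sigm(\Hb^\top\Hb)\Ib_m$ directly on the linear mean dynamics replaces the paper's strong-convexity Lemma \ref{lem1}. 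The genuine difference is the boundedness step, which you rightly identify as the crux. The paper handles it in one stroke on the unshifted state: since the graph is connected and $\rank(\Hb)=m$, the matrix $\mathbf{P}:=\Lb\otimes\Ib_m+\tilde{\Hb}$ is positive definite, and once $\alpha(t)\le K$ one has $\Mb(t)=K(\Lb\otimes\Ib_m)+\alpha(t)\tilde{\Hb}\succeq\alpha(t)\mathbf{P}$, which gives $\frac{\mathrm{d}}{\mathrm{dt}}\|\xb(t)\|\le-\alpha(t)\sigm(\mathbf{P})\|\xb(t)\|+\alpha(t)\|\zb_H\|$ and boundedness via Lemma \ref{lem2}(ii). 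Your composite function $U=\|\bar{\xb}-\yb^\ast\|^2+c\|\xb-\1_N\otimes\bar{\xb}\|^2$ reaches the same conclusion by a two-time-scale argument, and it does close: all troublesome terms (the constant residuals $r_i^\ast=\hb_i^\top\yb^\ast-z_i$, and both cross-couplings) carry a factor $\alpha(t)$, so after Young they are dominated by the constant-order consensus dissipation and half of the $\alpha$-order mean dissipation once $t\ge t_0$ with $\alpha(t_0)$ small, giving $\dot U\le-c_1\alpha U+c_2\alpha$ and hence boundedness (note the weight $c$ need not even be small; the absorption comes from $\alpha(t)\to 0$). What each buys: the paper's argument is shorter and requires no tuning beyond $\alpha\le K$, exploiting that $K(\Lb\otimes\Ib_m)+\alpha\tilde{\Hb}$ is uniformly positive definite relative to $\alpha$; yours avoids that eigenvalue comparison, works in the error coordinates that are reused later anyway, and makes explicit which couplings the diminishing step size must dominate—at the cost of a slightly more delicate bookkeeping of constants.
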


{Let $\sigm(\cdot)$ and $\sigsecm(\cdot)$ denote the smallest and the second smallest eigenvalue of a real symmetric matrix, respectively. For two functions $g,h:\R^{\ge0}\to\R^+$, we say $g(t)=\MO(h(t))$ if there exist $c>0$ and $\tau>0$ such that $g(t) \le c\cdot h(t)$ for all $t\ge\tau$.

The following theorem characterizes the convergence speed of the algorithm (\ref{fixed}) for different choices of step size {known to decay with a $t$'s inverse power that is no bigger than one.}
\begin{theorem}\label{thm:convergence_rate}
Suppose the conditions of Theorem \ref{thm:1} hold. Define $\yb^\ast=(\Hb^\top \Hb)^{-1}\Hb^\top\zb$.
\begin{enumerate}[(i)]
\item If $\alpha(t)=\MO({\frac{1}{t}})$, then along (\ref{fixed}) there hold
\begin{enumerate}[(a)]
\item $\big\|\sum\limits_{i=1}^N \xb_i(t)/N-\yb^\ast\big\| = \MO\bigg(\frac{1}{t^{\min(1,\frac{\sigm(\Hb^\top\Hb)}{N})}}\bigg)\quad$ for $\ \sigm(\Hb^\top\Hb)\neq N$.
\item $\big\|\sum\limits_{i=1}^N \xb_i(t)/N-\yb^\ast\big\| = \MO\big(\frac{\log t}{t}\big)\qquad\qquad\qquad $ for $\ \sigm(\Hb^\top\Hb)=N$.
\end{enumerate}
\medskip
\item If $\alpha(t)=\MO({\frac{1}{t^\lambda}})$ for $\lambda\in(0,1)$, then along (\ref{fixed}) there holds
\begin{equation}\notag
\big\|\sum\limits_{i=1}^N \xb_i(t)/N-\yb^\ast\big\| = \MO\bigg(\frac{1}{t^{\lambda}}\bigg).
\end{equation}
\end{enumerate}
\end{theorem}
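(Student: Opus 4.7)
The plan is to split the state vector into its network average and the disagreement from that average, derive an ODE for each, and then carefully integrate the average's ODE using the fact that the disagreement part decays essentially at the rate of $\alpha(t)$.

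First I would introduce $\bar{\xb}(t) := \sum_{i=1}^N \xb_i(t)/N$ and the deviation $\deltab(t) := \xb(t) - \1\otimes\bar{\xb}(t)$. Since $(\1^\top\otimes\Ib_m)(\Lb\otimes\Ib_m)=0$ for a connected undirected graph, averaging (\ref{fixed}) kills the consensus part and gives
\begin{equation}\notag
\dot{\bar{\xb}}(t) = -\frac{\alpha(t)}{N}\,\Hb^\top\Hb\,(\bar{\xb}(t)-\yb^\ast) - \frac{\alpha(t)}{N}\sum_{i=1}^N \hb_i\hb_i^\top(\xb_i(t)-\bar{\xb}(t)),
\end{equation}
where I used $\Hb^\top\Hb\yb^\ast = \Hb^\top\zb$. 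Writing $\eb(t):=\bar{\xb}(t)-\yb^\ast$, this becomes the affine time-varying equation
\begin{equation}\label{plan:e}
\dot{\eb}(t) = -\frac{\alpha(t)}{N}\,\Hb^\top\Hb\,\eb(t) + \alpha(t)\,\pmb{r}(t),\qquad \|\pmb{r}(t)\| \le \frac{\|\tilde{\Hb}\|}{N}\|\deltab(t)\|.
\end{equation}

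Next I would show $\|\deltab(t)\|=\MO(\alpha(t))$. Projecting (\ref{eq:x_ori}) by $\Pi=\Ib_{Nm}-\frac{1}{N}(\1\1^\top)\otimes\Ib_m$, and noting $\Pi$ commutes with $\Lb\otimes\Ib_m$, I obtain
\begin{equation}\notag
\dot{\deltab}(t) = -K(\Lb\otimes\Ib_m)\deltab(t) - \alpha(t)\,\Pi\bigl(\tilde{\Hb}\xb(t)-\zb_H\bigr).
\end{equation}
Since Theorem \ref{thm:1} already yields boundedness of $\xb(t)$ (because it converges), the forcing term has norm at most $C_0\,\alpha(t)$ for some constant $C_0$. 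On the range of $\Pi$, the symmetric operator $\Lb\otimes\Ib_m$ is bounded below by $\sigsecm(\Lb)>0$ (graph connectivity), so the Lyapunov function $U(t) = \|\deltab(t)\|$ satisfies (in the viscosity/right-upper-derivative sense)
\begin{equation}\notag
\dot{U}(t) \le -K\sigsecm(\Lb)\,U(t) + C_0\,\alpha(t).
\end{equation}
Variation of constants gives $U(t)\le U(0)e^{-K\sigsecm(\Lb)t} + C_0\int_0^t e^{-K\sigsecm(\Lb)(t-s)}\alpha(s)\,ds$. For $\alpha$ polynomially decaying (power $\le 1$), the convolution of $\alpha$ against a pure exponential kernel is of order $\alpha(t)/(K\sigsecm(\Lb))$ (the kernel is sharply concentrated near $s=t$, where $\alpha$ varies slowly). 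Hence $\|\deltab(t)\|=\MO(\alpha(t))$, so $\|\pmb{r}(t)\|=\MO(\alpha(t))$ and (\ref{plan:e}) reduces to $\dot{\eb} = -(\alpha(t)/N)\Hb^\top\Hb\,\eb + \MO(\alpha(t)^2)$.

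Finally I would integrate (\ref{plan:e}). Because $\Hb^\top\Hb$ is constant and symmetric positive definite (since $\rank(\Hb)=m$), diagonalizing reduces the problem to scalar equations $\dot{y} = -(\alpha(t)a/N)y + \MO(\alpha(t)^2)$ with $a\ge\sigm(\Hb^\top\Hb)$. The solution bound is
\begin{equation}\notag
\|\eb(t)\| \le e^{-\frac{\sigm(\Hb^\top\Hb)}{N}\int_0^t\alpha(s)ds}\|\eb(0)\| + C_1\int_0^t e^{-\frac{\sigm(\Hb^\top\Hb)}{N}\int_s^t\alpha(r)dr}\alpha(s)^2\,ds.
\end{equation}
I would then evaluate the right-hand side case by case. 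For $\alpha(t)=\MO(1/t^\lambda)$ with $\lambda\in(0,1)$, the exponential $e^{-c\int_0^t\alpha(s)ds}$ decays faster than any polynomial because $\int_0^t\alpha(s)ds\asymp t^{1-\lambda}\to\infty$, and the convolution integral contributes $\MO(\alpha(t))=\MO(1/t^\lambda)$ by the same sharp-peak argument as above. For $\alpha(t)=\MO(1/t)$, substituting $\alpha(s)=c/s$ makes the kernel $e^{-(c\sigm(\Hb^\top\Hb)/N)\log(t/s)}=(s/t)^{\sigm(\Hb^\top\Hb)/N}$; the resulting integral $t^{-\sigm(\Hb^\top\Hb)/N}\int_\tau^t s^{\sigm(\Hb^\top\Hb)/N-2}ds$ can be computed in closed form and yields the three regimes: $\MO(1/t^{\sigm(\Hb^\top\Hb)/N})$ when $\sigm(\Hb^\top\Hb)<N$, $\MO(\log t/t)$ when $\sigm(\Hb^\top\Hb)=N$, and $\MO(1/t)$ when $\sigm(\Hb^\top\Hb)>N$. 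The first exponential term $e^{-(\sigm(\Hb^\top\Hb)/N)\log t}=t^{-\sigm(\Hb^\top\Hb)/N}$ is always dominated by these.

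The main obstacle I anticipate is the handling of the two small technical points in the convolution: (i) justifying the sharp-peak estimate $\int_0^t e^{-c\int_s^t\alpha(r)dr}\alpha(s)^k ds = \MO(\alpha(t)^{k-1})$ for slowly decaying $\alpha$, and (ii) correctly isolating the critical logarithmic regime $\sigm(\Hb^\top\Hb)=N$, where the exponent in $(s/t)^{\sigm(\Hb^\top\Hb)/N}$ cancels the $s^{-2}$ singularity to give $\log t$. Everything else (Lyapunov bound on $\deltab$, averaging, diagonalization of $\Hb^\top\Hb$) is routine once the bookkeeping is set up.
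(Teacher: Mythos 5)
Your proposal is correct and reaches all three regimes, but it takes a genuinely different route from the paper. The paper never writes the exact ODE for the average; instead it works with squared-norm Lyapunov inequalities, invokes $2\sigm(\Hb^\top\Hb)$-strong convexity of $f$ (its Lemma \ref{lem1}), and, because its forcing terms $\beta(t)=\MO(\alpha(t)\|\xb-\bar{\xb}^\diamond\|)$ and $\omega(t)=\MO(\alpha(t)\|\xb-\bar{\xb}^\diamond\|\,\|\bar{\xb}-\yb^\ast\|)$ are coupled to the unknown quantities, it must run an infinite bootstrapping recursion (the exponent sequences $a_r,b_r,\hat b_r,c_r,d_r$ with $a_{r+1}=\tfrac12 a_r-1$, etc.) together with a delicate case split over the set $\mU$ to track exactly when integrals of $\MO(s^{-1})$ inject logarithms; the critical case $\sigm(\Hb^\top\Hb)=N$ emerges there as $(\log t)^2/t^2$ for the squared error. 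You instead derive the exact affine time-varying equation $\dot{\eb}=-\tfrac{\alpha(t)}{N}\Hb^\top\Hb\,\eb+\alpha(t)\pmb{r}(t)$, bound the disagreement in a single pass by working with the norm (not its square) against the constant-rate kernel $e^{-K\sigsecm(\Lb)(t-s)}$ to get $\|\deltab(t)\|=\MO(\alpha(t))$, and then integrate by variation of constants with the kernel $e^{-\frac{\sigm}{N}\int_s^t\alpha}$ evaluated explicitly — which makes the three regimes and the logarithm at $\sigm(\Hb^\top\Hb)=N$ appear in closed form, with no recursion and no case set $\mU$. What you lose is small: your ``sharp-peak'' convolution estimate is exactly the paper's Lemma \ref{lem:int_bound}, so you still need that lemma (or its one-line proof), and your constant in $\|\pmb{r}(t)\|\le \frac{\|\tilde{\Hb}\|}{N}\|\deltab(t)\|$ is off by a factor $\sqrt{N}$, which is immaterial for order estimates. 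One shared caveat: in case (i) your substitution $\alpha(s)=c/s$ (and the paper's step (\ref{eq:integral_cases})) really uses a two-sided comparison $\alpha(t)\sim 1/t$ — a pure upper bound $\MO(1/t)$ would not control the kernel $e^{-\frac{\sigm}{N}\int_s^t\alpha}$, and a constant $c\neq 1$ would shift the exponent to $c\,\sigm(\Hb^\top\Hb)/N$; this matches the paper's implicit reading of the hypothesis (step sizes decaying like an inverse power of $t$), so it is not a gap relative to the paper, but it is worth stating explicitly if you write the argument out.
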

Clearly, Theorem \ref{thm:convergence_rate} provides some guidance on the choice of the step size $\alpha(t)$ to guarantee fast convergence speed as follows:
\begin{enumerate}[(i)]
\item For linear equations and networks with {$\frac{\sigm(\Hb^\top\Hb)}{N}\ge 1$}, $\alpha(t)=\MO(\frac{1}{t})$ yields the fastest convergence speed.
\item For linear equations and networks with $\frac{\sigm(\Hb^\top\Hb)}{N}<1$, $\alpha(t)=\MO(\frac{1}{t^\lambda})$ with $\frac{\sigm(\Hb^\top\Hb)}{N}<\lambda<1$ admits the fastest convergence speed. In this case, the rate of convergence will increase as $\lambda$ becomes larger. Interestingly however, when $\lambda$ reaches one, the rate of convergence suddenly drops to that of the case $\lambda=\frac{\sigm(\Hb^\top\Hb)}{N}$.
\end{enumerate}
These results, especially the discontinuity around the inverse power one of $t$, would have been difficult to predict. As will be shown later, numerical results demonstrate that the convergence upper bounds established in Theorem 3 are also the asymptotic lower bounds.


\subsection{Convergence over Switching Networks}
Now we consider a more general case where the least-squares solutions of (\ref{eq:linear_equation}) can be unique or non-unique, and the network $\mathcal{G}_{\sigma(t)}$ switches among a collection of graphs. {Evidently, the Caratheodory solutions of (\ref{eq:x_ori}) exist for all initial conditions because the set of times corresponding to discontinuities of $\mathcal{G}_{\sigma(t)}$ is assumed to have measure zero.}

\begin{theorem}\label{thm:2}Suppose $\rank(\Hb)\le m$ and denote the set of least-squares solutions of (\ref{eq:linear_equation}) by $\mathcal{Y}_{\rm LS}=\argmin f(\yb)$. In particular, $|\mathcal{Y}_{\rm LS}|=1$ if $\rank(\Hb)=m$. Suppose Assumption \ref{ass:1} (i), (ii) and (iii) hold.  If all $\mathcal{G}\in\mathcal{Q}^\ast$ are connected, then along any solution of (\ref{fixed}) over the switching graph $\mathcal{G}_{\sigma(t)}$ there exists $\hat{\yb}\in\mathcal{Y}_{\rm LS}$ such that
\[
\lim_{t\to \infty} \xb_i(t)=\hat{\yb}
\]
for all $i\in\mathcal{V}$.
\end{theorem}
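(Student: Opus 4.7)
The plan is to decompose $\xb(t)=\mathbf{1}_N\otimes\bar{\xb}(t)+\pi(t)$, where $\bar{\xb}(t):=\tfrac{1}{N}\sum_{i=1}^N\xb_i(t)$ is the centroid and $\pi(t)$ is the disagreement, and then analyze the two components separately. Assumption \ref{ass:1}(iii) acts as the glue: it both contains the disagreement via the Laplacian and controls the cross term between $\pi$ and the gradient perturbation appearing in the centroid dynamics.

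First I would fix any $\yb^\ast\in\mathcal{Y}_{\rm LS}$ and study $W(t)=\|\xb(t)-\mathbf{1}_N\otimes\yb^\ast\|^2$. Differentiating along (\ref{eq:x_ori}), using $\mathbf{1}_N^\top\Lb(t)=0$ together with the optimality identity $\Hb^\top(\Hb\yb^\ast-\zb)=0$, all terms linear in $\bar{\xb}-\yb^\ast$ telescope away and one obtains
\[
\dot{W}(t)\le -2K\,\sigsecm(\Lb(t))\,\|\pi(t)\|^2+2\alpha(t)\,C_0\,\|\pi(t)\|,
\]
with $C_0$ depending only on $\Hb,\zb,\yb^\ast$. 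Since $\mathcal{Q}^\ast$ is finite and every $\mathcal{G}\in\mathcal{Q}^\ast$ is connected, $\sigsecm(\Lb(t))\ge\lambda_{\min}>0$; Young's inequality on the cross term then produces $\dot{W}(t)\le -K\lambda_{\min}\|\pi(t)\|^2+C_1\alpha^2(t)$. Assumption \ref{ass:1}(iii) immediately delivers boundedness of $\xb(t)$ and $\|\pi\|^2\in L^1(0,\infty)$; since $\xb$ and $\alpha$ are bounded, $\dot{\xb}$ is bounded, so $\|\pi\|^2$ is uniformly continuous and Barbalat's lemma forces $\pi(t)\to 0$.

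Next I would analyze the centroid. The symmetric Laplacian cancels out under averaging, leaving
\[
\dot{\bar{\xb}}(t)=-\frac{\alpha(t)}{N}\Bigl[\Hb^\top\Hb(\bar{\xb}(t)-\yb^\dagger)+\sum_{i=1}^N\hb_i\hb_i^\top(\xb_i(t)-\bar{\xb}(t))\Bigr],
\]
where $\yb^\dagger$ denotes the minimum-norm least-squares solution, that is, the unique element of $\mathcal{Y}_{\rm LS}\cap\colsp\{\Hb^\top\}$. Let $P_0$ and $P_1$ be the orthogonal projections of $\R^m$ onto $\ker\Hb$ and $\colsp\{\Hb^\top\}$, respectively. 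Applying $P_0$ and using $P_0\Hb^\top=0$ and $P_0\hb_i=0$ yields $P_0\dot{\bar{\xb}}=0$, so $P_0\bar{\xb}(t)\equiv P_0\bar{\xb}(0)=:\bar{\xb}_0\in\ker\Hb$ is a conserved quantity. For the complementary component, with $V_1(t):=\|P_1\bar{\xb}(t)-\yb^\dagger\|^2$ and $\sigma_+>0$ the smallest positive eigenvalue of $\Hb^\top\Hb$, another Young step on the disagreement-driven term produces
\[
\dot{V}_1(t)\le -\frac{\alpha(t)\sigma_+}{N}V_1(t)+C_2\alpha(t)\|\pi(t)\|^2.
\]

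The main obstacle is extracting $V_1\to 0$ from this inequality while $\alpha(t)\to 0$; I would handle this by the time change $\tau=\int_0^t\alpha(s)\,ds$, which is a bijection $[0,\infty)\to[0,\infty)$ by Assumption \ref{ass:1}(i). In the $\tau$-timescale the inequality becomes $\tilde{V}_1'(\tau)\le -(\sigma_+/N)\tilde{V}_1(\tau)+C_2\|\pi(t(\tau))\|^2$, and the total forcing $\int_0^\infty C_2\|\pi(t(\tau))\|^2\,d\tau=\int_0^\infty C_2\alpha(t)\|\pi(t)\|^2\,dt$ is finite because both factors have already been controlled. Variation of constants applied to this exponentially stable linear ODE with $L^1$ perturbation yields $\tilde{V}_1(\tau)\to 0$, i.e.\ $P_1\bar{\xb}(t)\to\yb^\dagger$. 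Combined with the conserved $P_0$-component, $\bar{\xb}(t)\to\hat{\yb}:=\bar{\xb}_0+\yb^\dagger\in\mathcal{Y}_{\rm LS}$, and together with $\pi(t)\to 0$ this gives $\xb_i(t)\to\hat{\yb}$ for every $i\in\mathcal{V}$, as required. The delicate element is precisely this time change: it converts a diminishing-step gradient flow into a textbook exponentially stable linear system driven by an $L^1$ perturbation, for which convergence to zero is classical; without it, the decay rate $-\alpha(t)\sigma_+/N$ and the forcing $\alpha(t)\|\pi\|^2$ are too entangled to treat directly.
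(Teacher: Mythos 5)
Your proposal is correct, but it takes a genuinely different route from the paper's. The paper first proves boundedness of $\xb(t)$ via the smallest eigenvalue of $\Lb_{\sigma(t)}\otimes\Ib_m+\tilde{\Hb}$ minimized over the finite set $\mathcal{Q}^\ast$, obtains consensus from its Gr\"{o}nwall-type Lemma \ref{lem2} with a vanishing perturbation, and then argues through general convexity: it integrates $\frac{\mathrm{d}}{\mathrm{dt}}\|\bar{\xb}-\yb^\ast\|^2\le -\frac{\alpha}{N}(f(\bar{\xb})-f(\yb^\ast))+\omega(t)$, invokes Lemma \ref{lem3} (which rests on results imported from the robust-consensus literature) to get $\int_0^\infty|\omega(t)|\mathrm{dt}<\infty$, extracts by Bolzano--Weierstrass a subsequence of $\bar{\xb}$ converging to some $\hat{\yb}\in\mathcal{Y}_{\rm LS}$, and closes with a contradiction argument showing $\|\bar{\xb}(t)-\hat{\yb}\|^2$ converges. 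You instead exploit the quadratic structure of least squares: anchoring the single Lyapunov function $W=\|\xb-\1_N\otimes\yb^\ast\|^2$ at a least-squares solution makes the term linear in $\bar{\xb}-\yb^\ast$ vanish by the normal equations $\Hb^\top(\Hb\yb^\ast-\zb)=0$, so one inequality simultaneously yields boundedness and $\|\pi\|^2\in L^1$, Barbalat gives consensus, and the splitting of the centroid dynamics along $\ker\Hb\oplus\colsp\{\Hb^\top\}$ (the kernel component being conserved because every $\hb_i$ lies in the row space) together with the time change $\tau=\int_0^t\alpha(s)\mathrm{ds}$ settles optimality. What your route buys: it bypasses Lemma \ref{lem2}, Lemma \ref{lem3} and the subsequence/contradiction step, and it identifies the limit explicitly as $\hat{\yb}=P_0\bar{\xb}(0)+\yb^\dagger$ --- information the paper's proof does not provide and which its conclusion lists as an open question; moreover your boundedness argument does not need positive definiteness of $\Lb\otimes\Ib_m+\tilde{\Hb}$, which is only semidefinite when $\rank(\Hb)<m$. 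What the paper's route buys: since it uses only convexity of $f$ and integrability of $\alpha\Phi$, it transfers almost verbatim to the uniformly jointly connected setting of Theorem \ref{thm:3}, whereas your argument leans on the uniform lower bound $\sigsecm(\Lb_{\sigma(t)})\ge\lambda_{\min}>0$ available only under instantaneous connectivity, and on the exact quadratic form of $f$. Two small points to make explicit in a polished write-up: the identity $P_0\dot{\bar{\xb}}=0$ holds only almost everywhere (Caratheodory solutions), which still forces $P_0\bar{\xb}$ to be constant by absolute continuity; and the uniform continuity of $\|\pi\|^2$ needed for Barbalat should be justified from the boundedness of $\xb$, of $\alpha$, and of the finitely many Laplacians.
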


In the following theorem, we prove that the connectedness condition for graphs in Theorem \ref{thm:2} can be relaxed. We provide an essential definition.
\begin{definition}
Consider a graph $\mathcal{G}_{\sigma(t)}=(\mathcal{V}, \mathcal{E}_{\sigma(t)})$. The joint graph of $\mathcal{G}_{\sigma(t)}$ in the time interval $[t_1,t_2)$ with $t_1<t_2\le\infty$ is denoted as
\begin{equation}\notag
\mathcal{G}([t_1,t_2))=(\mathcal{V},\cup_{t\in[t_1,t_2)}\mathcal{E}_{\sigma(t)}).
\end{equation}
Then $\mathcal{G}_{\sigma(t)}$ is uniformly jointly connected if there exists a constant $T>0$ such that $\mathcal{G}([t,t+T))$ is connected for all $t\ge 0$.
\end{definition}
{Let $\tau_1,\tau_2,\dots$ with $0<\tau_1<\tau_2<\dots$ denote the consecutive discontinuities of $\mathcal{G}_{\sigma(t)}$. Then we present the following assumption.
\begin{assumption}\label{ass:dwelltime}
There exists $\tau_d>0$ such that
\begin{equation}\notag
\tau_{i+1}-\tau_i>\tau_d
\end{equation}
for all $i=0,1,2,\dots$ where $\tau_0=0$.
\end{assumption}}
Then we have the following result.
\begin{theorem}\label{thm:3}
Let $\mathcal{Y}_{\rm LS}=\argmin f(\yb)$ be the set of least-squares solutions of (\ref{eq:linear_equation}) and suppose $\rank(\Hb)\le m$. Let Assumption \ref{ass:1} (i), (ii), (iii) { and Assumption \ref{ass:dwelltime}} hold. Suppose there exists $M>0$ such that $\|\xb(t)\|\le M$ for all $t\ge 0$. If $\mathcal{G}_{\sigma(t)}$ is uniformly jointly connected, then along any solution of (\ref{fixed}) over the switching graph $\mathcal{G}_{\sigma(t)}$ there exists $\hat{\yb}\in\mathcal{Y}_{\rm LS}$ such that
\[
\lim_{t\to \infty} \xb_i(t)=\hat{\yb}
\]
for all $i\in\mathcal{V}$.
\end{theorem}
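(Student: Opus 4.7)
The strategy splits into two parts. First I would show that the agents asymptotically reach consensus despite the graph being only jointly (not instantaneously) connected; second, I would show that the common limit lies in $\mathcal{Y}_{\rm LS}$. The hypothesis $\|\xb(t)\|\le M$ plays a crucial role throughout: it turns each gradient term $\tfrac{\alpha(t)}{2}\nabla f_i(\xb_i(t))$ into a perturbation of order $\MO(\alpha(t))$, whose contribution can then be controlled through $\int_0^\infty\alpha^2(s)ds<\infty$.

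\textbf{Step 1 (consensus).} Introduce the average $\bar{\xb}(t)=\tfrac{1}{N}\sum_i\xb_i(t)$ and the disagreement $\delta_i(t)=\xb_i(t)-\bar{\xb}(t)$, stacked as $\delta(t)=(\delta_1^\top,\dots,\delta_N^\top)^\top$. A direct computation gives
\begin{equation}\notag
\tfrac{d}{dt}\bigl(\tfrac{1}{2}\|\delta(t)\|^2\bigr)=-K\,\delta(t)^\top(\Lb(t)\otimes\Ib_m)\delta(t)-\tfrac{\alpha(t)}{2}\sum_{i=1}^N\delta_i(t)^\top\nabla f_i(\xb_i(t)),
\end{equation}
and by $\|\xb(t)\|\le M$ the perturbation is bounded by $C_1\alpha(t)\|\delta(t)\|$ for a constant $C_1$ depending on $M,\Hb,\zb$. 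Uniform joint connectivity together with Assumption \ref{ass:dwelltime} implies that the unperturbed Laplacian flow contracts $\|\delta(\cdot)\|^2$ on each window of length $T$ by at least a factor $1-\mu$, for some $\mu=\mu(T,\tau_d,K)\in(0,1)$. Applying this window-wise contraction to the full perturbed dynamics, then invoking $\alpha(t)\to 0$ and $\int_0^\infty\alpha^2(s)ds<\infty$, I would establish both $\|\delta(t)\|\to 0$ and $\int_0^\infty\alpha(s)\|\delta(s)\|ds<\infty$ (the latter via Cauchy--Schwarz once $\|\delta\|\in L^2$ is in hand).

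\textbf{Step 2 (limit of the average).} Averaging (\ref{fixed}) across $i$ and using $\nabla f(\yb)=\sum_i\nabla f_i(\yb)$ with the fact that each $\nabla f_i$ is globally Lipschitz yields
\begin{equation}\notag
\dot{\bar{\xb}}(t)=-\frac{\alpha(t)}{2N}\nabla f(\bar{\xb}(t))+e(t),\qquad\|e(t)\|\le C_2\alpha(t)\|\delta(t)\|.
\end{equation}
Fix an arbitrary $\yb^\ast\in\mathcal{Y}_{\rm LS}$ and set $V(t)=\|\bar{\xb}(t)-\yb^\ast\|^2$. Convexity of $f$ together with $\nabla f(\yb^\ast)=0$ gives
\begin{equation}\notag
\dot V(t)\le -\frac{\alpha(t)}{N}\bigl[f(\bar{\xb}(t))-f(\yb^\ast)\bigr]+C_3\alpha(t)\|\delta(t)\|,
\end{equation}
and the last term is integrable by Step 1. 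Standard arguments then imply that $V(t)$ converges and $\int_0^\infty\alpha(s)[f(\bar{\xb}(s))-f(\yb^\ast)]ds<\infty$; combined with $\int_0^\infty\alpha=\infty$ and the Lipschitzness of $t\mapsto f(\bar{\xb}(t))$ (which follows from $\dot{\bar{\xb}}$ being bounded), this forces a subsequence $\bar{\xb}(t_k)\to\hat{\yb}\in\mathcal{Y}_{\rm LS}$. Re-running the Lyapunov estimate with $\yb^\ast=\hat{\yb}$ then upgrades this to $\lim_{t\to\infty}\bar{\xb}(t)=\hat{\yb}$, and consensus from Step 1 transfers the limit to each individual $\xb_i(t)$.

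\textbf{Main obstacle.} The crux is Step 1: extracting an effective disagreement contraction from uniformly joint (rather than pointwise) connectivity, while simultaneously absorbing a state-dependent perturbation that is not a priori small. The dwell-time assumption is essential to freeze $\mathcal{G}_{\sigma(t)}$ on each subinterval, piece the constituent graphs together over a window of length $T$, and extract a quantitative Laplacian mixing rate; the trade-off between the gradient-induced disagreement injection of order $\alpha(t)$ and the window-wise contraction balances precisely because $\alpha\in L^2$.
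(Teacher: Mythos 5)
Your plan is correct in outline and follows essentially the same route as the paper: the paper's proof of Theorem \ref{thm:3} simply invokes Proposition 4.10 of \cite{Shi2013Robust} (boundedness plus uniform joint connectivity yields consensus) and then repeats the argument of Theorem \ref{thm:2}, whose core is exactly your Step 2 together with the integrability estimate $\int_0^\infty\alpha(t)\Phi(t)\,\mathrm{dt}<\infty$ of Lemma \ref{lem3}, itself obtained from the same window-wise contraction inequalities you sketch in Step 1. Your only deviations are cosmetic: you re-derive in-line the contraction machinery the paper imports from \cite{Shi2013Robust}, and you reach $\int_0^\infty\alpha(s)\|\delta(s)\|\,\mathrm{ds}<\infty$ via an $L^2$ bound on the disagreement plus Cauchy--Schwarz instead of the paper's direct double-sum estimate in Lemma \ref{lem3}.
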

{We must mention that it is hard to provide the conditions for which the system state $\xb(t)$ is bounded in Theorem \ref{thm:3}. However, numerical examples can show the boundedness condition is satisfied in many circumstances.}


\section{Proofs of Statements}\label{sec:proofs}
Now we provide the proofs of our main results, in addition to a couple of key lemmas.
\subsection{Key Lemmas}
We begin with several lemmas that assist with the proofs of Theorem \ref{thm:1}, Theorem \ref{thm:2} and Theorem \ref{thm:3}. {Let $\langle\cdot,\cdot\rangle$ denote the inner product of two vectors of the same dimension.} {We say a differentiable function $g:\R^N\to\R$ is $\theta$-strongly convex if
$$g(\yb_1)-g(\yb_2)\ge\nabla g(\yb_2)^\top(\yb_1-\yb_2)+\frac{\theta}{2}\|\yb_1-\yb_2\|^2$$
for all $\yb_1,\yb_2\in\R^N$.}



{
\begin{lemma}\label{lem1}
Consider a matrix $\Hb\in\R^{N\times m}$ with $N\ge m$ and a vector $\zb\in\R^N$. Define $f(\yb)=\|\Hb\yb-\zb\|^2$. If $\rank(\Hb)=m$, then $f$ is $2\sigm(\Hb^\top\Hb)$-strongly convex.
\end{lemma}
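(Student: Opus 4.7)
The plan is to verify the strong-convexity inequality directly by expanding both sides, since $f$ is a quadratic form and its Hessian is the constant matrix $2\Hb^\top\Hb$. The only ingredient from linear algebra that I actually need is that $\rank(\Hb)=m$ implies $\Hb^\top\Hb$ is symmetric positive definite, so its smallest eigenvalue $\sigm(\Hb^\top\Hb)$ is strictly positive and $\vb^\top\Hb^\top\Hb\vb \ge \sigm(\Hb^\top\Hb)\|\vb\|^2$ for every $\vb\in\R^m$ by the Rayleigh quotient.

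First I would expand $f(\yb)=\yb^\top\Hb^\top\Hb\yb-2\zb^\top\Hb\yb+\zb^\top\zb$ and compute $\nabla f(\yb)=2\Hb^\top\Hb\yb-2\Hb^\top\zb$. Next, for any $\yb_1,\yb_2\in\R^m$, I would form the quantity
\begin{equation}\notag
f(\yb_1)-f(\yb_2)-\nabla f(\yb_2)^\top(\yb_1-\yb_2)
\end{equation}
and simplify it. The cross terms involving $\zb$ cancel, and after collecting the quadratic terms one finds that the expression reduces to $(\yb_1-\yb_2)^\top\Hb^\top\Hb(\yb_1-\yb_2)$.

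Finally I would apply the Rayleigh-quotient lower bound to the remaining quadratic form to conclude
\begin{equation}\notag
(\yb_1-\yb_2)^\top\Hb^\top\Hb(\yb_1-\yb_2)\ge \sigm(\Hb^\top\Hb)\,\|\yb_1-\yb_2\|^2 = \frac{2\sigm(\Hb^\top\Hb)}{2}\|\yb_1-\yb_2\|^2,
\end{equation}
which matches the definition of $2\sigm(\Hb^\top\Hb)$-strong convexity stated in the excerpt. There is no real obstacle here: the proof is essentially bookkeeping, and the only place where the rank assumption is actually used is the step that lower-bounds the quadratic form by the smallest eigenvalue (without $\rank(\Hb)=m$ we would only get $\sigm(\Hb^\top\Hb)=0$ and the bound would be vacuous). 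If I wanted to be even more succinct I could just invoke the standard fact that a twice-differentiable function whose Hessian satisfies $\nabla^2 f\succeq\theta\Ib$ everywhere is $\theta$-strongly convex, and note that $\nabla^2 f(\yb)=2\Hb^\top\Hb\succeq 2\sigm(\Hb^\top\Hb)\Ib$, but the direct expansion seems preferable for self-containment.
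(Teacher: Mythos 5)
Your proof is correct and follows essentially the same route as the paper: both reduce the strong-convexity inequality to the exact identity $f(\yb_1)-f(\yb_2)-\nabla f(\yb_2)^\top(\yb_1-\yb_2)=(\yb_1-\yb_2)^\top\Hb^\top\Hb(\yb_1-\yb_2)$ (the paper obtains it via the exact second-order Taylor expansion of the quadratic $f$, you by direct expansion) and then lower-bound the quadratic form by $\sigm(\Hb^\top\Hb)\|\yb_1-\yb_2\|^2$ using positive definiteness of $\Hb^\top\Hb$ under $\rank(\Hb)=m$. No gaps.
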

\begin{proof}
Evidently, $\Hb^\top\Hb-\sigm(\Hb^\top\Hb)\Ib$ is a positive semidefinite matrix.
Let $\yb_1,\yb_2\in\R^m$. By applying Taylor series expansion on $f$ around $\yb_2$. we obtain
\begin{equation}\notag
\begin{aligned}
&\quad f(\yb_1) - f(\yb_2) \\
&=\nabla f(\yb_2)^\top (\yb_1-\yb_2) + \frac{1}{2} (\yb_1-\yb_2)^\top\nabla^2 f(\yb_2) (\yb_1-\yb_2)\\
&=\nabla f(\yb_2)^\top (\yb_1-\yb_2) + (\yb_1-\yb_2)^\top \Hb^\top\Hb (\yb_1-\yb_2)\\
&\ge \nabla f(\yb_2)^\top (\yb_1-\yb_2) + \sigm(\Hb^\top\Hb) \|\yb_1-\yb_2\|^2,
\end{aligned}
\end{equation}
which completes the proof. \hfill $\square$
\end{proof}
}

{
\begin{lemma}\label{lem:int_bound}
Let $\mu,\lambda>0$. Then
$$\int_0^t \MO\bigg(\frac{e^{\mu s}}{s^\lambda}\bigg)\mathrm{ds} = \MO\bigg(\frac{e^{\mu t}}{t^\lambda}\bigg).$$
\end{lemma}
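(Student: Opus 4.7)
The plan is to reduce the assertion to a concrete estimate on the model integrand $e^{\mu s}/s^\lambda$ and then unwind the definition of $\MO$ for the integrand of interest. First I would unpack the hypothesis: writing $g(s)$ for the integrand, the paper's $\MO$ convention furnishes constants $c>0$ and $\tau>0$ with $g(s)\le c\,e^{\mu s}/s^\lambda$ for all $s\ge \tau$. Splitting
$$\int_0^t g(s)\,\mathrm{ds} \;=\; \int_0^\tau g(s)\,\mathrm{ds} + \int_\tau^t g(s)\,\mathrm{ds},$$
the first summand is a finite constant $C_0$ (we implicitly need $g$ to be integrable on $[0,\tau]$, which is the standing assumption behind such $\MO$ statements), while the second is at most $c\int_\tau^t e^{\mu s}/s^\lambda\,\mathrm{ds}$. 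So the problem reduces to showing $\int_\tau^t e^{\mu s}/s^\lambda\,\mathrm{ds} = \MO(e^{\mu t}/t^\lambda)$.

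The main step, and essentially the only real work, is to establish the limit
$$\lim_{t\to\infty} \frac{\int_\tau^t e^{\mu s}/s^\lambda\,\mathrm{ds}}{e^{\mu t}/t^\lambda} \;=\; \frac{1}{\mu}.$$
I would apply L'Hopital's rule. Both the numerator and the denominator diverge to $+\infty$ (the numerator because $\mu>0$ makes $e^{\mu s}/s^\lambda$ non-integrable on $[\tau,\infty)$; the denominator trivially), so the ratio equals the limit of the quotient of derivatives. That quotient has numerator $e^{\mu t}/t^\lambda$ and denominator $(e^{\mu t}/t^\lambda)(\mu-\lambda/t)$, which collapses to $1/(\mu-\lambda/t)\to 1/\mu$. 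Equivalently, one integration by parts with $u=1/s^\lambda$, $dv=e^{\mu s}\,ds$ produces a boundary term of $e^{\mu t}/(\mu t^\lambda)$ plus a remainder integral carrying an extra factor $1/s$ inside, which is strictly lower order than the leading term and hence negligible.

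From the limit one extracts a constant $C>0$ and a threshold $\tau'\ge \tau$ with $\int_\tau^t e^{\mu s}/s^\lambda\,\mathrm{ds}\le C\,e^{\mu t}/t^\lambda$ for all $t\ge \tau'$. Combining, $\int_0^t g(s)\,\mathrm{ds}\le C_0 + cC\,e^{\mu t}/t^\lambda$; since $e^{\mu t}/t^\lambda\to\infty$, the additive constant $C_0$ can be absorbed by enlarging the constant and the threshold slightly, yielding the desired $\MO$ bound. The main (relatively minor) obstacle is simply verifying the divergence condition needed to invoke L'Hopital; beyond that, the argument is routine.
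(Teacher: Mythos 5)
Your proof is correct, but it follows a genuinely different route from the paper's. The paper never takes a limit: it fixes $\phi\in(0,\mu)$, sets $\tau=\lambda/(\mu-\phi)$, and uses the pointwise inequality $\frac{e^{\mu s}}{s^\lambda}\le\frac{1}{\phi}\bigl(\mu-\frac{\lambda}{s}\bigr)\frac{e^{\mu s}}{s^\lambda}=\frac{1}{\phi}\frac{\mathrm{d}}{\mathrm{ds}}\frac{e^{\mu s}}{s^\lambda}$ valid for $s\ge\tau$, so that integrating the exact derivative immediately gives the bound $\frac{e^{\mu t}}{\phi t^\lambda}$ plus constants. You instead split off the $[0,\tau]$ piece and prove the sharp asymptotic $\int_\tau^t e^{\mu s}s^{-\lambda}\,\mathrm{ds}\sim \frac{e^{\mu t}}{\mu t^\lambda}$ via L'H\^opital (equivalently one integration by parts), then convert the limit into an $\MO$ bound and absorb the additive constant. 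Your version buys a slightly stronger conclusion — the precise asymptotic constant $1/\mu$ rather than $1/\phi$ for arbitrary $\phi<\mu$ — at the cost of invoking L'H\^opital, for which you correctly verify the divergence hypotheses; the paper's version is more elementary and gives an explicit finite-$t$ inequality without any limiting argument. You are also more careful than the paper about the behavior near $s=0$: you note explicitly that the actual integrand must be integrable on $[0,\tau]$ (which holds in all applications of the lemma, since the integrands there are continuous), whereas the paper writes $\int_{0^+}^{\tau}e^{\mu s}s^{-\lambda}\,\mathrm{ds}$ as a finite constant even though this model integral diverges when $\lambda\ge1$; your handling is the cleaner reading of the statement. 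No gaps.
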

\begin{proof}
Introduce $\phi\in(0,\mu)$ and define $\tau=\frac{\lambda}{\mu-\phi}$. Then it can be easily shown for $t>\tau$, there holds
\begin{align}
\int_{0^+}^t \frac{e^{\mu s}}{s^\lambda}\mathrm{ds} &= \int_{0^+}^{\tau} \frac{e^{\mu s}}{s^\lambda}\mathrm{ds} + \int_{\tau}^{t} \frac{e^{\mu s}}{s^\lambda}\mathrm{ds} \notag\\
&\le \int_{0^+}^{\tau} \frac{e^{\mu s}}{s^\lambda}\mathrm{ds} + \int_{\tau}^{t} \bigg(\frac{1}{\phi}\big(\mu-\frac{\lambda}{s}\big)\bigg)\frac{e^{\mu s}}{s^\lambda}\mathrm{ds} \notag\\
&\le \int_{0^+}^{\tau} \frac{e^{\mu s}}{s^\lambda}\mathrm{ds} + \int_{\tau}^{t} \frac{\mathrm{d}}{\mathrm{ds}}\frac{e^{\mu s}}{\phi s^\lambda}\notag\\
&= \frac{e^{\mu t}}{\phi t^\lambda} + \int_{0^+}^{\tau} \frac{e^{\mu s}}{s^\lambda}\mathrm{ds} - \frac{e^{\mu \tau}}{\phi \tau^\lambda},\notag
\end{align}
which completes the proof noting the definition of $\MO(\cdot)$.

\end{proof}
}

\begin{lemma}\label{lem2}
Consider a continuously differentiable function $g:\R^{\ge 0}\to\R^{\ge 0}$. If there exist continuous functions $\gamma:\R^{\ge 0}\to\R^+$ and $\beta:\R^{\ge 0}\to\R^+$ satisfying $\dot{g}(t)\le -\gamma(t)g(t)+\beta(t)$, then
\begin{equation}\notag
g(t)\le e^{-\int_{0}^{t}\gamma(s)\mathrm{ds}}g(0)+\int_{0}^{t}e^{-\int_s^t \gamma(r)\mathrm{dr}}\beta(s)\mathrm{ds}.
\end{equation}
Furthermore, the following statements hold:
\begin{enumerate}[(i)]
\item If $\int_{0}^\infty \gamma(t)\mathrm{dt} =\infty$ and $\lim\limits_{t\to\infty}\frac{\beta(t)}{\gamma(t)}=0$, then $\lim\limits_{t\to\infty}g(t)=0$.
\item If $\int_{0}^\infty \gamma(t)\mathrm{dt} =\infty$ and $\limsup\limits_{t\to\infty}\frac{\beta(t)}{\gamma(t)}<\infty$, then $\{g(t)\}_{t\geq0}$ is bounded.
\end{enumerate}
\end{lemma}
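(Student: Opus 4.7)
First I would derive the displayed integral bound via the integrating-factor trick, and then use it to prove (i) and (ii) through an $\varepsilon$--$T$ decomposition of the convolution integral on the right-hand side. For the inequality itself, let $\Gamma(t):=\int_0^t\gamma(s)\,\mathrm{ds}$ and consider the auxiliary function $\tilde g(t):=e^{\Gamma(t)}g(t)$. Differentiating and substituting the hypothesis yields
\[
\dot{\tilde g}(t)=e^{\Gamma(t)}\bigl(\dot g(t)+\gamma(t)g(t)\bigr)\le e^{\Gamma(t)}\beta(t),
\]
after which integrating from $0$ to $t$ and multiplying back by $e^{-\Gamma(t)}$ delivers the claimed bound. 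No sign information on $g$ is needed for this step, and the continuity of $\gamma,\beta$ only matters to make the integrals well defined.

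For statement (i), the first term $e^{-\Gamma(t)}g(0)$ vanishes because $\Gamma(t)\to\infty$ by hypothesis, so the real work concerns the convolution term. Given $\varepsilon>0$, I would pick $T$ large enough that $\beta(s)/\gamma(s)\le\varepsilon$ for all $s\ge T$, and split as $\int_0^T+\int_T^t$. The head piece is bounded by $e^{-\int_T^t\gamma(r)\,\mathrm{dr}}\int_0^T\beta(s)\,\mathrm{ds}$, which tends to $0$ as $t\to\infty$ since $\int_T^t\gamma(r)\,\mathrm{dr}\to\infty$. The tail is controlled using $\beta\le\varepsilon\gamma$ on $[T,\infty)$ together with the identity
\[
\int_T^t e^{-\int_s^t\gamma(r)\,\mathrm{dr}}\gamma(s)\,\mathrm{ds}=1-e^{-\int_T^t\gamma(r)\,\mathrm{dr}}\le 1,
\]
obtained by recognising the integrand as $\partial_s\bigl(-e^{-\int_s^t\gamma(r)\,\mathrm{dr}}\bigr)$. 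Hence the tail contributes at most $\varepsilon$, so $\limsup_{t\to\infty}g(t)\le\varepsilon$, and sending $\varepsilon\downarrow 0$ proves (i).

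Statement (ii) follows from exactly the same decomposition: with $\beta/\gamma\le C$ on $[T,\infty)$, the tail contribution is at most $C\cdot 1 = C$, while the head piece and the initial term $e^{-\Gamma(t)}g(0)$ are each bounded in $t$ for $t\ge T$, so $g$ stays bounded. The only non-mechanical step — and therefore the main obstacle — is producing the exponential identity displayed above; but it is just the integrating-factor calculation run in reverse, so no new idea is required. Everything else reduces to routine splitting and passage to the limit.
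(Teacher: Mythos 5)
Your proposal is correct and takes essentially the same route as the paper: the paper simply cites Gr\"onwall's inequality where you rederive it with the integrating factor $e^{\Gamma(t)}$, and for (i)--(ii) both arguments use the same $\varepsilon$--$T$ splitting of the convolution term together with the identity $\int_T^t e^{-\int_s^t\gamma(r)\,\mathrm{d}r}\gamma(s)\,\mathrm{d}s = 1-e^{-\int_T^t\gamma(r)\,\mathrm{d}r}\le 1$. The only cosmetic difference is your bound on the head piece (via $e^{-\int_T^t\gamma}\int_0^T\beta$ rather than the paper's $\max_{[0,T]}\beta/\gamma$ times the same derivative identity), which changes nothing substantive.
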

\begin{proof}
{The proof of the inequality of $g(t)$ follows from Gr\"{o}nwall's Inequality \cite{gronwall}.}
Now we prove the two statements in the following:

\noindent (i). Suppose the conditions $\int_0^\infty\gamma(t)\mathrm{dt}=\infty$ and $\lim\limits_{t\to\infty}\frac{\beta(t)}{\gamma(t)}=0$ hold. Evidently, the term $u(t):=\exp(-\int_{0}^{t}\gamma(s)\mathrm{ds})g(0)$ goes to zero as $t$ goes to infinity. Then we focus on the other term
$$k(t):=\int_{0}^{t}\exp(-\int_s^t \gamma(r)\mathrm{dr})\beta(s)\mathrm{ds}.$$
Since for a sufficiently small $\epsilon>0$, there exists $t_0>0$ such that $\frac{\beta(t)}{\gamma(t)}<\epsilon$ for all $t>t_0$. {Define $\xi=\max\limits_{0\le t \le t_0}\frac{\beta(t)}{\gamma(t)}$. Then for all $t>t_0$, there holds
\begin{equation}\notag
\begin{aligned}
k(t)
&< \xi\int_{0}^{t_0}\mathrm{d}(\exp(-\int_s^t \gamma(r)\mathrm{dr}))+\epsilon\int_{t_0}^{t}\mathrm{d}(\exp(-\int_s^t \gamma(r)\mathrm{dr})) \\
&= \xi\exp(-\int_{t_0}^t \gamma(r)\mathrm{dr})(1-\exp(-\int_0^{t_0}\gamma(r)\mathrm{dr}))+\epsilon(1 - \exp(-\int_0^t \gamma(r)\mathrm{dr}))\\
&<\xi\exp(-\int_{t_0}^t \gamma(r)\mathrm{dr})+\epsilon.
\end{aligned}
\end{equation}


Since $\exp(-\int_{t_0}^t \gamma(r)\mathrm{dr})$ goes to zero as $t$ goes to infinity, one has $\lim\limits_{t\to\infty}k(t)=0$.} Then we have $\lim\limits_{t\to\infty}g(t)=0$.

\noindent (ii). Suppose the conditions $\int_0^\infty\gamma(t)\mathrm{dt}=\infty$ and\\ $\limsup\limits_{t\to\infty}\frac{\beta(t)}{\gamma(t)}<\infty$ hold. Then {there exist $B>0$ and $\hat{t}>0$ such that} $\frac{\beta(t)}{\gamma(t)}<B$ for all $t>\hat{t}$. Similarly, the limit of the term $u(t)=\exp(-\int_{0}^{t}\gamma(s)\mathrm{ds})g(0)$ is zero as $t$ goes to infinity, i.e., given $B>0$, there exists $t_u>0$ such that $u(t)<B$ for all $t>t_u$. Also we have $k(t) < B\int_{0}^{t}\exp(-\int_s^t \gamma(r)\mathrm{dr})\gamma(s)\mathrm{ds}<B$ for $t>\hat{t}$. Let $t_0:=\max\{\hat{t},t_u\}$. Hence, $g(t)<2B$ for $t>t_0$. Since $g(t)$ is continuous, we have $g(t)<\max\{B_1,2B\}$ for all $t\ge 0$ where $B_1=\max\limits_{0\le t\le t_0}g(t)$, i.e., $\{g(t)\}_{t\geq0}$ is bounded.
\end{proof}

\begin{lemma}\label{lem3}
Consider the flow (\ref{fixed}) and the underlying communication graph $\mathcal{G}_{\sigma(t)}$. Suppose there exists $M>0$ such that $\|\xb(t)\|\le M$ for all $t\ge 0$. Suppose $\mathcal{G}_\sigma(t)$ is uniformly jointly connected. Let $\xb_i(t)$ for all $i$ denote the state held by node $i$ of $\mathcal{G}_{\sigma(t)}$. Define $\Phi(t)=\max\limits_{1\le i,j\le N}\|\xb_i(t)-\xb_j(t)\|$ and a continuous function $\alpha:\R^{\ge 0}\to\R^+$. If $\int_0^{\infty}\alpha^2(t)\mathrm{dt}<\infty$, then $\int_{0}^{\infty}\alpha(t)\Phi(t)\mathrm{dt}<\infty$.
\end{lemma}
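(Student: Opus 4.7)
The plan is to discretize the time axis into windows $[kT,(k+1)T)$, where $T$ is the joint-connectivity period, and to exploit the fact that on each such window the pure consensus part of (\ref{fixed}) contracts the maximum pairwise disagreement while the gradient part acts as a perturbation of pointwise size $\MO(\alpha(t))$. Set $a_k:=\int_{kT}^{(k+1)T}\alpha(s)\,\mathrm{ds}$. The hypothesis $\|\xb(t)\|\le M$ immediately implies that each $\nabla f_i(\xb_i(t))=2(\hb_i\hb_i^{\top}\xb_i(t)-z_i\hb_i)$ is uniformly bounded in $t$, so the forcing term in (\ref{fixed}) obeys $\|\tfrac{\alpha(t)}{2}\nabla f_i(\xb_i(t))\|\le C_1\alpha(t)$ for some constant $C_1$ depending only on $M$, $\|\hb_i\|$ and $|z_i|$.

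The next step is to derive two estimates from variation of constants applied to (\ref{fixed}) relative to its consensus part. Because the transition operator of the unforced dynamics $\dot{\yb}=-K(\Lb(t)\otimes\Ib_m)\yb$ is, in each coordinate block, a row-stochastic matrix, the unperturbed evolution sends any initial profile into a convex combination of its values; hence in continuous time $\Phi$ is nonexpansive on any subinterval and satisfies, over one full jointly-connected window, the standard contraction $\Phi((k+1)T)\le (1-\gamma)\Phi(kT)$ for some $\gamma\in(0,1)$ depending on $K$, $T$ and on $\mathcal{Q}^\ast$. Feeding the forcing bound through the variation-of-constants integral, one obtains
\begin{equation}\notag
\Phi((k+1)T)\le (1-\gamma)\,\Phi(kT)+C_2 a_k,\qquad \Phi(t)\le \Phi(kT)+C_2 a_k\ \text{for }t\in[kT,(k+1)T],
\end{equation}
for a suitable constant $C_2$.

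From here the conclusion follows by summation. Writing
\begin{equation}\notag
\int_0^{\infty}\alpha(t)\Phi(t)\,\mathrm{dt}=\sum_{k\ge 0}\int_{kT}^{(k+1)T}\alpha(t)\Phi(t)\,\mathrm{dt}\le\sum_{k\ge 0}\bigl(\Phi(kT)+C_2 a_k\bigr)a_k,
\end{equation}
and iterating the contraction to get $\Phi(kT)\le(1-\gamma)^{k}\Phi(0)+C_2\sum_{j<k}(1-\gamma)^{k-1-j}a_j$, the problem reduces to summing geometric kernels against $a_k$. Cauchy--Schwarz gives $a_k^{2}\le T\int_{kT}^{(k+1)T}\alpha^{2}(s)\,\mathrm{ds}$, whence $\sum_{k}a_k^{2}\le T\int_0^{\infty}\alpha^{2}<\infty$ and in particular $\{a_k\}$ is bounded. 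The geometric term $\sum_{k}(1-\gamma)^{k}a_k$ is therefore finite, while the cross term $\sum_{k}a_k\sum_{j<k}(1-\gamma)^{k-1-j}a_j$ is controlled via $2a_j a_k\le a_j^{2}+a_k^{2}$ and the geometric summability of $(1-\gamma)^{k-1-j}$ in either index by a multiple of $\sum_{k}a_k^{2}$, also finite. Combining these pieces yields $\int_0^{\infty}\alpha(t)\Phi(t)\,\mathrm{dt}<\infty$.

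I expect the main technical obstacle to be the window-contraction $\Phi((k+1)T)\le(1-\gamma)\Phi(kT)+C_2 a_k$: one must show that the unforced consensus part of (\ref{fixed}) shrinks the diameter by a factor uniformly bounded away from one over every window of length $T$. This rests on a uniform scrambling-type property of the finite product of block-stochastic transition matrices associated with the constant phases of $\mathcal{G}_{\sigma(t)}$ inside each window, for which Assumption \ref{ass:dwelltime}, guaranteeing a uniformly bounded number of switches per window, is essential.
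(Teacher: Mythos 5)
Your proposal is correct and follows essentially the same route as the paper: window decomposition, the two estimates $\Phi(t)\le\Phi(kT)+C_2 a_k$ and $\Phi((k+1)T)\le(1-\gamma)\Phi(kT)+C_2 a_k$ (which the paper simply imports from the cited robust-consensus reference \cite{Shi2013Robust} rather than re-deriving), followed by the identical summation argument using Cauchy--Schwarz, the bound $2a_j a_k\le a_j^2+a_k^2$, and geometric series. The only difference is that you sketch the window-contraction step yourself, whereas the paper relies on the citation for it.
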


{
\begin{proof}
{By \cite{Shi2013Robust}, we know that there exists $C_1>0, C_2>0$ such that for all $k\ge 0$ and $kC_1\le t\le (k+1)C_1$,
\begin{align}
\Phi(t)&\le \Phi(kC_1)+C_2\int_{kC_1}^{(k+1)C_1}\alpha(t)\mathrm{dt}\label{eq:SIAM1}\\
\Phi((k+1)C_1) &\le \beta \Phi(kC_1)+C_2\int_{kC_1}^{(k+1)C_1}\alpha(t)\mathrm{dt} \label{eq:SIAM2}
\end{align}
with $\beta\in(0,1)$. Define $\omega_k:=\int_{kC_1}^{(k+1)C_1}\alpha(t)\mathrm{dt}$ and $\alpha^\ast:=\sup\limits_{t\ge0} \alpha(t)$. Then the proof is completed by the following inequalities.
\begin{align}
\quad\int_{0}^{\infty}\alpha(t)\Phi(t)\mathrm{dt}&=\sum_{k=0}^{\infty}\int_{kC_1}^{(k+1)C_1}\alpha(t)\Phi(t)\mathrm{dt} \notag\\
&\overset{\mathrm a)}{\le} \sum_{k=0}^{\infty}\int_{kC_1}^{(k+1)C_1}\alpha(t)(\Phi(kC_1)+C_2\int_{kC_1}^{(k+1)C_1}\alpha(s)\mathrm{ds})\mathrm{dt} \notag\\
&= \sum_{k=0}^\infty\omega_k\Phi(kC_1)+ C_2\sum_{k=0}^\infty(\int_{kC_1}^{(k+1)C_1}\alpha(t)\mathrm{dt})^2\notag\\
&\overset{\mathrm b)}{\le} \sum_{k=0}^\infty\omega_k\Phi(kC_1)+ C_1C_2\int_0^{\infty}\alpha^2(t)\mathrm{dt}\notag\\
&\overset{\mathrm c)}{\le} \sum_{k=1}^\infty \omega_k(\beta^k\Phi(0)+C_2\sum_{r=1}^k\beta^{k-r}\omega_{r-1})+\omega_0\Phi(0)\notag\\
&\quad+ C_1C_2\int_0^{\infty}\alpha^2(t)\mathrm{dt},\notag
\end{align}
where ${\mathrm a)}$ is from (\ref{eq:SIAM1}), ${\mathrm b)}$ is due to Cauchy--Schwarz inequality,  and ${\mathrm c)}$ is from (\ref{eq:SIAM2}). This allows us to further conclude
\begin{align}
&\quad\int_{0}^{\infty}\alpha(t)\Phi(t)\mathrm{dt} \notag\\
&\le \alpha^\ast C_1\Phi(0)\sum_{k=1}^\infty\beta^k+\frac{C_2}{2}\sum_{k=1}^\infty\sum_{r=1}^k\beta^{k-r}(\omega_k^2+\omega_{r-1}^2)+\omega_0\Phi(0)+ C_1C_2\int_0^{\infty}\alpha^2(t)\mathrm{dt}\notag\\
&\le \frac{\alpha^\ast\beta C_1\Phi(0)}{1-\beta}+\frac{C_2}{1-\beta}\sum_{k=1}^\infty \omega_k^2+\omega_0\Phi(0)+ C_1C_2\int_0^{\infty}\alpha^2(t)\mathrm{dt}\notag\\
&= (\frac{C_2}{1-\beta}+C_1C_2)\int_0^\infty\alpha^2(t)\mathrm{dt}+(\frac{\alpha^\ast\beta C_1}{1-\beta}+\omega_0)\Phi(0), \notag
\end{align}
which completes the proof of the lemma.   }
\end{proof}

}




\subsection{Proof of Theorem \ref{thm:1}}
The proof starts by establishing $\xb(t)$ is bounded, which is given as follows. Consider
\begin{equation}\notag
\begin{aligned}
Q_K(\xb,t) &:= \xb^\top\Mb(t)\xb \\
&= K\sum\limits_{\{i,j\}\in\mathcal{E}}[\mathbf{A}]_{ij}\|\xb_j-\xb_i\|^2+\alpha(t)\sum\limits_{i=1}^{N}|\hb_i^\top\xb_i|^2
\end{aligned}
\end{equation}
with $\xb\neq 0$. Clearly $Q_K(\xb,t)\ge 0$ and the equality holds only if $\xb_i=\xb_j$ for any $i,j$ and $\hb_i^\top\xb_i=0$ for all $i$. Because $\rank(\Hb)=m$ by hypothesis, there does not exist $\xb\neq 0$ such that $Q_K(\xb,t)=0$, i.e., $Q_K(\xb,t)>0$ for $\xb\neq 0$. Therefore, $\Mb(t)$ is positive-definite for all $t$. Similarly, $\mathbf{P}:=\Lb\otimes\Ib_m+\tilde{\Hb}$ is also positive-definite. Under Assumption \ref{ass:1} (ii), we know that there exists sufficiently large $t_0$ such that $\alpha(t)<K$ for all $t>t_0$. By Theorem 4.2.2 in \cite{horn2012matrix}, we know that $Q_K(\xb,t) \ge \alpha(t)\xb^\top \mathbf{P}\xb  \ge \alpha(t)\sigm(\mathbf{P})\|\xb\|^2$ for any $\xb$ and all $t>t_0$. Let $h(t)=\|\xb(t)\|^2$. Then
\begin{equation}\notag
\begin{aligned}
\quad\frac{\mathrm{d}}{\mathrm{dt}}h(t)
&= -2\xb(t)^\top(K(\Lb\otimes\Ib_m)+\alpha(t)\tilde{\Hb})\xb(t)+2\alpha(t)\xb(t)^\top\zb_H \\
&\le -2\alpha(t)\sigm(\mathbf{P})\|\xb(t)\|^2+2\alpha(t)\|\xb(t)\|\|\zb_H\|
\end{aligned}
\end{equation}
for $t>t_0$. Consider
\begin{align}
&\quad\frac{\mathrm{d}}{\mathrm{dt}}\sqrt{h(t)} = \frac{\dot{h}(t)}{2\sqrt{h(t)}}\le -\alpha(t)\sigm(\mathbf{P})\sqrt{h(t)}+\alpha(t)\|\zb_H\|,\ t\ge t_0. \label{eq:bounded_state_proved}
\end{align}
By Lemma \ref{lem2}.(ii), identifying $g(t)$ with $\sqrt{h(t)}$, we have that $\sqrt{h(t)}=\|\xb(t)\|$ is bounded for $t>t_0$. Due to the continuity of $\xb(t)$, $\|\xb(t)\|$ is bounded for all $t\ge 0$.

For the second step of the proof, we first denote $\bar{\xb}(t) := \frac{1}{N}\sum\limits_{i=1}^{N}\xb_i(t)$ and $\bar{\xb}^\diamond(t) := \mathbf{1}_N\otimes\bar{\xb}(t)$. By simple calculation, it can be shown that $\dot{\bar{\xb}}^\diamond(t)=\mathbf{1}_N\otimes(\frac{1}{N}\sum\limits_{i=1}^{N}\dot{\xb}_i(t))=-\mathbf{1}_N\otimes(\frac{\alpha(t)}{2N}\sum\limits_{i=1}^{N}\nabla f_i(\xb_i))$. Then by \cite{horn2012matrix}
\begin{align}
&\quad\frac{\mathrm{d}}{\mathrm{dt}}\|\xb(t)-\bar{\xb}^\diamond(t)\|^2\notag\\
&= 2\langle\xb(t)-\bar{\xb}^\diamond(t),\ \dot{\xb}(t)-\dot{\bar{\xb}}^\diamond(t)\rangle \notag\\
&= 2\langle\xb(t)-\bar{\xb}^\diamond(t),\ -K(\Lb\otimes\Ib_m)\xb(t) -\alpha(t)\tilde{\Hb}\xb(t)  +\alpha(t)\zb_H+\mathbf{1}_N\otimes(\frac{\alpha(t)}{2N}\sum\limits_{i=1}^{N}\nabla f_i(\xb_i(t)))\rangle \notag\\
&= 2\langle\xb(t)-\bar{\xb}^\diamond(t),-K(\Lb\otimes\Ib_m)(\xb(t)-\bar{\xb}^\diamond(t))\rangle + \beta(t) \notag\\
&\le -2\sigsecm(\Lb) K\|\xb(t)-\bar{\xb}^\diamond(t)\|^2+\beta(t),\label{eq:consensus_inequality}
\end{align}
where
\begin{equation}
\begin{aligned}\notag
&\beta(t)=2\alpha(t)\langle\xb(t)-\bar{\xb}^\diamond(t),\zb_H-\tilde{\Hb}\xb(t)+\mathbf{1}_N\otimes(\frac{1}{2N}\sum\limits_{i=1}^{N}\nabla f_i(\xb_i(t)))\rangle.
\end{aligned}
\end{equation}
Under Assumption \ref{ass:1} (ii) and by the claim that $\|\xb(t)\|$ is bounded, we know that $\lim\limits_{t\to\infty}\beta(t)=0$. By Lemma \ref{lem2}.(i), $\lim\limits_{t\to\infty}\|\xb(t)-\bar{\xb}^\diamond(t)\|^2=0$, i.e., the dynamical system (\ref{eq:x_ori}) achieves a consensus.

Now we turn to the last step of the proof and analyze the relationship between $\bar{\xb}(t)$ and the optimal point $\yb^\ast$. Let
$$\omega(t)=\frac{\alpha(t)}{N}\langle\bar{\xb}(t)-\yb^\ast,\nabla f(\bar{\xb}(t))-\sum\limits_{i=1}^{N}\nabla f_i(\xb_i(t))\rangle.$$
By Lemma \ref{lem1}, $f(\yb)$ is $2\sigm(\Hb^\top\Hb)$-strongly convex, and there holds
{
\begin{align}
\quad \frac{\mathrm{d}}{\mathrm{dt}}\|\bar{\xb}(t)-\yb^\ast\|^2 &= 2\langle\bar{\xb}(t)-\yb^\ast,\dot{\bar{\xb}}(t)\rangle  \notag \\
&= -\frac{\alpha(t)}{N}\langle\bar{\xb}(t)-\yb^\ast,\sum\limits_{i=1}^{N}\nabla f_i(\xb_i(t))\rangle \notag \\
&= -\frac{\alpha(t)}{N}\langle\bar{\xb}(t)-\yb^\ast,\nabla f(\bar{\xb}(t))\rangle+\omega(t)\notag\\
&\le  -\frac{\alpha(t)}{N}(f(\bar{\xb}(t))-f(\yb^\ast)+\sigm(\Hb^\top\Hb)\|\bar{\xb}(t)-\yb^\ast\|^2) + \omega(t) \label{eq:diff_xbar_yast} \\
&\le -\frac{2\sigm(\Hb^\top\Hb)\alpha(t)}{N}\|\bar{\xb}(t)-\yb^\ast\|^2 + \omega(t).  \label{eq:diff_xbar_yast1}
\end{align}}
Since $\lim\limits_{t\to\infty}(\bar{\xb}(t)-\xb_i(t))=0$, namely $\lim\limits_{t\to\infty}(\nabla f(\bar{\xb}(t))-\sum\limits_{i=1}^{N}\nabla f_i(\xb_i(t)))=0$, we have $\lim\limits_{t\to\infty}\|\bar{\xb}(t)-\yb^\ast\|^2=0$ by Lemma \ref{lem2}.(i), i.e., (\ref{eq:x_ori}) reaches a consensus and finally all nodes hold the value of the least-squares solution to (\ref{eq:linear_equation}), which completes the proof.

{

\subsection{Proof of Theorem \ref{thm:convergence_rate}}

We continue to use the definitions of $\beta(t),\bar{\xb}(t),\bar{\xb}^\diamond(t),\omega(t)$ in the proof of Theorem \ref{thm:1}.

\noindent (i) Let $\alpha(t)=\MO(\frac{1}{t})$. Due to the boundedness of $\|\xb(t)\|$ proved by (\ref{eq:bounded_state_proved})
\begin{align}
\beta(t) &= \MO(\alpha(t)\|\xb(t)-\bar{\xb}^\diamond(t)\|)\label{eq:beta_relations}\\
&=\MO\bigg(\frac{1}{t}\bigg).\label{eq:bounded_beta}
\end{align}
By applying Lemma \ref{lem2} to (\ref{eq:consensus_inequality}) and based on (\ref{eq:bounded_beta}), one has
\begin{equation}\label{eq:consensus_ineq2}
\|\xb(t)-\bar{\xb}^\diamond(t)\|^2 = \int_0^t \MO\bigg(\frac{e^{2\sigsecm(\Lb) K(s-t)}}{s}\bigg) \mathrm{ds}.
\end{equation}
Clearly (\ref{eq:consensus_ineq2}) with Lemma \ref{lem:int_bound} yields
\begin{equation}\label{eq:consensus_ineq_results}
\|\xb(t)-\bar{\xb}^\diamond(t)\|^2 = \MO\bigg(\frac{1}{t}\bigg).
\end{equation}
It can be noticed that (\ref{eq:beta_relations}) shows $\beta(t)$ is bounded by a function of $\|\xb(t)-\bar{\xb}^\diamond(t)\|$. Hence (\ref{eq:consensus_ineq_results}) leads to a tighter bound of $\beta(t)$ than (\ref{eq:bounded_beta})
$$ \beta(t) = \MO\bigg(\frac{1}{t^{\frac{3}{2}}}\bigg).$$
Based on (\ref{eq:beta_relations}), by recursively applying Lemma \ref{lem:int_bound} and Lemma \ref{lem2} on (\ref{eq:consensus_inequality}) with constantly updated upper bounds of $\beta(t)$ initialized by (\ref{eq:bounded_beta}), we can obtain a sequence of bounds on $\|\xb(t)-\bar{\xb}^\diamond(t)\|^2$ as following.
\begin{align}
\|\xb(t)-\bar{\xb}^\diamond(t)\|^2 = \MO (t^{a_r}),\ r=1,2,\dots, \label{eq:bounds_seq1}
\end{align}
where
\begin{equation}\notag
a_{r+1} = \frac{1}{2} a_r - 1,\ a_1 = -1.
\end{equation}
Clearly, $a_r$ in (\ref{eq:bounds_seq1}) goes to $-2$ as $r$ go to infinity. Then there holds
\begin{equation}\label{eq:consensus_ineq_results2}
\|\xb(t)-\bar{\xb}^\diamond(t)\|^2 = \MO\bigg(\frac{1}{t^2}\bigg).
\end{equation}
From the Cauchy--Schwarz inequality and (\ref{eq:consensus_ineq_results2})
{\begin{align}
\omega(t) &= \frac{2\alpha(t)}{N}(\bar{\xb}(t)-\yb^\ast)^\top\sum\limits_{i=1}^N \hb_i\hb_i^\top(\bar{\xb}(t)-\xb_i(t)) \notag\\
&\le \frac{2\alpha(t)}{N}\|\bar{\xb}(t)-\yb^\ast\|\sum\limits_{i=1}^N \|\hb_i\|^2\|\bar{\xb}(t)-\xb_i(t)\|\notag\\
&\le \rho\alpha(t)\|\bar{\xb}(t)-\yb^\ast\|\|\xb(t)-\bar{\xb}^\diamond(t)\| \label{eq:omega_relations}\\
&=\MO\big(t^{-2}\|\bar{\xb}(t)-\yb^\ast\|\big), \label{eq:omega_eq1}
\end{align}}
where
$$\rho:=\max\{2N^{-\frac{1}{2}}\|\hb_i\|^2:i\in\mathcal{V}\}.$$
{We apply Lemma \ref{lem2} on (\ref{eq:diff_xbar_yast1}) using the bound in (\ref{eq:omega_eq1}) and obtain
\begin{align}
\|\bar{\xb}(t)&-\yb^\ast\|^2=\MO\bigg(t^{-\frac{2\sigm(\Hb^\top\Hb)}{N}}\bigg)+\MO\bigg(t^{-\frac{2\sigm(\Hb^\top\Hb)}{N}}\bigg)\cdot\int_0^t \MO\bigg(s^{\frac{2\sigm(\Hb^\top\Hb)}{N}-2}\cdot\|\bar{\xb}(s)-\yb^\ast\|\bigg)\mathrm{ds}.\label{eq:integral_cases}
\end{align}
Depending on whether
$$s^{\frac{2\sigm(\Hb^\top\Hb)}{N}-2}\cdot\|\bar{\xb}(s)-\yb^\ast\|=\MO(s^{-1}),$$
the integral part in (\ref{eq:integral_cases}) falls into two different function classes. Therefore, we will discuss the bound of $\|\bar{\xb}(t)-\yb^\ast\|^2$ in two cases.

\noindent (a) We assume $\sigm(\Hb^\top\Hb)\neq N$. Define a set $\mU\subset[1,2)$ with
$$ \mU := \bigg\{\sum\limits_{i=1}^{r}\bigg(\frac{1}{2}\bigg)^{i-1}:r=2,3,\dots\bigg\}\mcup\{1\}.$$
We will see the proof of (a) can be achieved under two complementary scenarios.

\noindent [Scenario 1] Suppose $\frac{2\sigm(\Hb^\top\Hb)}{N}\in\R^+\setminus(\mU\mcup\{2\})$. From (\ref{eq:integral_cases}) with the fact $\|\bar{\xb}(t)-\yb^\ast\|=\MO(1)$
\begin{equation}
\|\bar{\xb}(t)-\yb^\ast\|^2=  \MO\bigg(\frac{1}{t^{\frac{2\sigm(\Hb^\top\Hb)}{N}}}+\frac{1}{t}\bigg).\label{eq:1_t_convergence_rate1}
\end{equation}
Define two sequences $\{b_r\}_{r=1,2,\dots}$ and $\{\hat{b}_r\}_{r=1,2,\dots}$ with
\begin{align}
b_{r+1} &= \frac{1}{2}b_r-1,\ b_1=-\frac{2\sigm(\Hb^\top\Hb)}{N}\notag\\
\hat{b}_{r+1} &= \frac{1}{2}\hat{b}_r-1,\ \hat{b}_1=-1.\notag
\end{align}
Direct verification shows
\begin{align}
b_r &\neq -\frac{2\sigm(\Hb^\top\Hb)}{N},\ \forall r\ge 2\label{eq:gua1}\\
\hat{b}_r &\neq -\frac{2\sigm(\Hb^\top\Hb)}{N},\ \forall r\ge 1. \label{eq:gua2}
\end{align}
It is evident (\ref{eq:gua1}) and (\ref{eq:gua2}) guarantee that no integral of $\MO(s^{-1})$ arises the following iteration process. Clearly
\begin{align}
\|\bar{\xb}(t)-\yb^\ast\|^2&\overset{\rm a)}{=}\MO\bigg(t^{-\frac{2\sigm(\Hb^\top\Hb)}{N}}\bigg)+\MO\bigg(t^{-\frac{2\sigm(\Hb^\top\Hb)}{N}}\bigg)\cdot\int_0^t \MO\bigg(s^{\frac{2\sigm(\Hb^\top\Hb)}{N}-2}\cdot\big(s^{-\frac{\sigm(\Hb^\top\Hb)}{N}}+s^{-\frac{1}{2}}\big) \bigg)\mathrm{ds}\notag\\
&\overset{\rm b)}{=}\MO\bigg(t^{-\frac{2\sigm(\Hb^\top\Hb)}{N}}+t^{-\frac{\sigm(\Hb^\top\Hb)}{N}-1}+t^{-\frac{3}{2}}\bigg),\label{eq:demo1}
\end{align}
where a) comes from (\ref{eq:integral_cases}) and (\ref{eq:1_t_convergence_rate1}), and b) is obtained by direct calculation. We apply a series of the recursions as from (\ref{eq:1_t_convergence_rate1}) to (\ref{eq:demo1}) and obtain the following bound.
\begin{align}
\|\bar{\xb}(t)-\yb^\ast\|^2
&=\MO\bigg(\sum\limits_{r=1}^\infty t^{b_r} + t^{\hat{b}_\infty}\bigg)=\MO\bigg(\frac{1}{t^{\min(\frac{2\sigm(\Hb^\top\Hb)}{N},2)}}\bigg),\label{eq:results_caseA}
\end{align}
where	
\begin{equation}\notag
\begin{aligned}
\hat{b}_\infty := {\lim\limits_{r\to\infty}\hat{b}_r}.
\end{aligned}
\end{equation}
\noindent [Scenario 2] Suppose $\frac{2\sigm(\Hb^\top\Hb)}{N}\in \mU$. Then there exists $r^\ast\in\{1,2,\dots\}$ such that
$$\hat{b}_{r^\ast}=-\frac{2\sigm(\Hb^\top\Hb)}{N}.$$
For ease of presentation, we define $\hat{b}_0 = 0$. Similarly to the process of obtaining (\ref{eq:results_caseA}), we apply $r^\ast$ rounds of iterations based on (\ref{eq:integral_cases}), and arrive at
\begin{align}
\|\bar{\xb}(t)-\yb^\ast\|^2
&= \MO\bigg(\sum\limits_{r=1}^{r^\ast} t^{b_r}\bigg) + \MO\bigg(t^{-\frac{2\sigm(\Hb^\top\Hb)}{N}}\bigg) \cdot\int_0^t \MO\bigg(s^{\frac{2\sigm(\Hb^\top\Hb)}{N}-2}\cdot s^{\hat{b}_{r^\ast-1}}\bigg)\mathrm{ds}\notag\\
&= \MO\bigg(\sum\limits_{r=1}^{r^\ast} t^{b_r}+t^{-\frac{2\sigm(\Hb^\top\Hb)}{N}}\log t\bigg).\label{eq:r_star_stop1}
\end{align}
Noticing the fact that the scenario hypothesis $\frac{2\sigm(\Hb^\top\Hb)}{N}\in[1,2)$, we claim there exists
$$\delta\in\bigg(0,2-\frac{2\sigm(\Hb^\top\Hb)}{N}\bigg)$$
such that
\begin{equation}\label{eq:bigo_relations}
\log t = \MO(t^\delta).
\end{equation}
Then it follows (\ref{eq:r_star_stop1}) and (\ref{eq:bigo_relations})
\begin{equation}\label{eq:r_star_continue1}
\|\bar{\xb}(t)-\yb^\ast\|^2=\MO\bigg(\sum\limits_{r=1}^{r^\ast} t^{b_r}+t^{\delta-\frac{2\sigm(\Hb^\top\Hb)}{N}}\bigg).
\end{equation}
Define a sequence $\{d_r\}_{r=1,2,\dots}$ with
$$ d_{r+1} = \frac{1}{2}d_r-1,\ d_1=\delta-\frac{2\sigm(\Hb^\top\Hb)}{N}.$$
Then it can be easily verified
\begin{equation}\label{eq:guarantee_no_log1}
d_2<-\frac{2\sigm(\Hb^\top\Hb)}{N}<d_1,
\end{equation}
which implies that there is no element in $\{d_r\}_{r=1,2,\dots}$ equal to $-\frac{2\sigm(\Hb^\top\Hb)}{N}$. Now we continue the iteration from (\ref{eq:r_star_continue1}), during which process (\ref{eq:guarantee_no_log1}) guarantees no integral of $\MO(s^{-1})$ arises. Infinite iterations indicate that the following bound holds.
\begin{align}
\|\bar{\xb}(t)-\yb^\ast\|^2 &= \MO\bigg(\sum\limits_{r=1}^{\infty} t^{b_r}+t^{d_\infty}\bigg)=\MO\bigg(\frac{1}{t^{\frac{2\sigm(\Hb^\top\Hb)}{N}}}\bigg)\label{eq:results_caseC}
\end{align}
with
$$ d_\infty := \lim\limits_{r\to\infty} d_r.$$
Evidently, the proof of (a) is completed by (\ref{eq:results_caseA}) and (\ref{eq:results_caseC}).

{\noindent (b) We assume $\sigm(\Hb^\top\Hb)=N$. Similarly, (\ref{eq:integral_cases}) gives
\begin{equation}\label{eq:1_t_convergence_rate1_caseB}
\|\bar{\xb}(t)-\yb^\ast\|^2 =\MO \bigg(\frac{1}{t^2}+\frac{1}{t}\bigg).
\end{equation}
Starting from (\ref{eq:1_t_convergence_rate1_caseB}) and based on (\ref{eq:integral_cases}), we obtain
\begin{align}
\|\bar{\xb}(t)-\yb^\ast\|^2\notag
&= \MO\big(t^{-2}\big)+\MO\big(t^{-2}\big)\int_0^t\MO\big(s^{-1}+s^{-\frac{1}{2}}\big)\mathrm{ds}          \notag\\
&= \MO\big(t^{-2}\big)+\MO\big(t^{-2}\log t\big)+\MO\big(t^{-\frac{3}{2}}\big).  \label{eq:demo2}
\end{align}
Again, we repeat the process from (\ref{eq:1_t_convergence_rate1_caseB}) to (\ref{eq:demo2}) recursively and obtain
\begin{align}
\|\bar{\xb}(t)-\yb^\ast\|^2
&=\MO\bigg(t^{-2}+t^{-2}\sum\limits_{r=1}^\infty (\log t)^{c_r} + t^{\hat{b}_\infty}\bigg)=\MO\bigg(\frac{(\log t)^2}{t^2}\bigg),\label{eq:results_caseB}
\end{align}
where
$$ c_{r+1} = \frac{1}{2} c_r + 1,\ c_1 = 1. $$}
Clearly, (\ref{eq:results_caseB}) completes the proof of (b).}


\noindent (ii) Let $\alpha(t)=\MO(\frac{1}{t^{\lambda}})$. Immediately there holds
\begin{equation}\label{eq:beta_bound_theta}
\beta(t)=\MO\bigg(\frac{1}{t^\lambda}\bigg).
\end{equation}
Starting from (\ref{eq:beta_bound_theta}), similar recursive applications of Lemma \ref{lem:int_bound} and Lemma \ref{lem2} on (\ref{eq:consensus_inequality}) result in
\begin{align}
\|\xb(t)-\bar{\xb}^\diamond(t)\|^2 &= \int_0^t\MO \bigg(\frac{e^{2\sigsecm(\Lb) K(s-t)}}{s^\lambda}\bigg) \mathrm{ds} = \MO\bigg(\frac{1}{t^\lambda}\bigg)\notag\\
\|\xb(t)-\bar{\xb}^\diamond(t)\|^2 &= \int_0^t\MO \bigg(\frac{e^{2\sigsecm(\Lb) K(s-t)}}{s^{\frac{3}{2}\lambda}}\bigg) \mathrm{ds} = \MO\bigg(\frac{1}{t^{\frac{3}{2}\lambda}}\bigg)\notag\\
&\cdots\notag\\
\|\xb(t)-\bar{\xb}^\diamond(t)\|^2 &= \MO\bigg(\frac{1}{t^{2\lambda}}\bigg).\label{eq:consensus_ineq_results2_theta}
\end{align}
It follows (\ref{eq:consensus_ineq_results2_theta}) and the fact $\|\bar{\xb}(t)-\yb^\ast\|=\MO(1)$
\begin{equation}\label{eq:omega_bound_theta}
\omega(t) = \MO\big(\alpha(t)\|\xb(t)-\bar{\xb}^\diamond(t)\|\|\bar{\xb}(t)-\yb^\ast\|\big) = \MO\bigg(\frac{1}{t^{2\lambda}}\bigg).
\end{equation}
With (\ref{eq:omega_bound_theta}) inserted in (\ref{eq:diff_xbar_yast1}), Lemma \ref{lem2} and simple change of variables {yield}
\begin{align}
\|\bar{\xb}(t)-\yb^\ast\|^2 &= \int_0^t \MO\bigg(\frac{e^{\frac{2\sigm(\Hb^\top\Hb)}{N(1-\lambda)}(s^{1-\lambda}-t^{1-\lambda})}}{s^{2\lambda}}\bigg)\mathrm{ds}\notag\\
&= \int_0^{t^{1-\lambda}} \MO\bigg(\frac{e^{\frac{2\sigm(\Hb^\top\Hb)}{N(1-\lambda)}(s-t^{1-\lambda})}}{s^{\frac{\lambda}{1-\lambda}}}\bigg)\mathrm{ds}.\label{eq:integral_transformed}
\end{align}
Clearly, one obtains by applying Lemma \ref{lem:int_bound} on (\ref{eq:integral_transformed})
\begin{equation}\label{eq:theta_case_initialize}
\|\bar{\xb}(t)-\yb^\ast\|^2 = \MO\bigg(\frac{1}{t^\lambda}\bigg).
\end{equation}
Again starting from (\ref{eq:theta_case_initialize}), recursive applications of Lemma \ref{lem:int_bound} and Lemma \ref{lem2} on (\ref{eq:diff_xbar_yast1}) gives
\begin{align}
\|\bar{\xb}(t)-\yb^\ast\|^2 &= \MO\bigg(\frac{e^{\frac{2\sigm(\Hb^\top\Hb)}{N(1-\lambda)}(s-t^{1-\lambda})}}{s^{\frac{\frac{3}{2}\lambda}{1-\lambda}}}\bigg)\mathrm{ds}=\MO\bigg(\frac{1}{t^{\frac{3}{2}\lambda}}\bigg)\notag\\
\|\bar{\xb}(t)-\yb^\ast\|^2 &= \MO\bigg(\frac{e^{\frac{2\sigm(\Hb^\top\Hb)}{N(1-\lambda)}(s-t^{1-\lambda})}}{s^{\frac{\frac{7}{4}\lambda}{1-\lambda}}}\bigg)\mathrm{ds}=\MO\bigg(\frac{1}{t^{\frac{7}{4}\lambda}}\bigg)\notag\\
&\cdots\notag\\
\|\bar{\xb}(t)-\yb^\ast\|^2 &= \MO\bigg(\frac{1}{t^{2\lambda}}\bigg),\notag\\
\end{align}
which completes the proof of (b).
}


\subsection{Proof of Theorem \ref{thm:2}}

Denote the averaged state at time $t$ by $\bar{\xb}(t) = \frac{1}{N}\sum\limits_{i=1}^{N}\xb_i(t)$ and $\bar{\xb}^\diamond(t) = \mathbf{1}_N\otimes\bar{\xb}(t)$. Denote $h(t)=\|\xb(t)\|^2$. Let $\Lb_{\sigma(t)}$ be the Laplacian of the graph $\mathcal{G}_{\sigma(t)}\in\mathcal{Q}^\ast$. Let $\mathbf{P}_{\sigma(t)}=\Lb_{\sigma(t)}\otimes\Ib_m+\tilde{\Hb}$. By a minor variant of a step in the proof of Theorem \ref{thm:1}, one has
\begin{equation}\notag
\begin{aligned}
\frac{\mathrm{d}}{\mathrm{dt}}\sqrt{h(t)} = \frac{\dot{h}(t)}{2\sqrt{h(t)}}\le -\alpha(t)\sigm(\mathbf{P}_{\sigma(t)})\sqrt{h(t)}+\alpha(t)\|\zb_H\|,\ t\ge t_0.
\end{aligned}
\end{equation}
Since $|\mathcal{Q}^\ast|<\infty$, the quantity $\min\limits_{t\ge 0}\sigm(\mathbf{P}_{\sigma(t)})=\sigm^\ast$ is well-defined and positive. Then it follows
\begin{equation}\notag
\begin{aligned}
\frac{\mathrm{d}}{\mathrm{dt}}\sqrt{h(t)} = \frac{\dot{h}(t)}{2\sqrt{h(t)}}\le -\alpha(t)\sigm^\ast\sqrt{h(t)}+\alpha(t)\|\zb_H\|,\ t\ge t_0.
\end{aligned}
\end{equation}
Thus a conclusion can be drawn that $\|\xb(t)\|$ is bounded. Similarly
\[
\frac{\mathrm{d}}{\mathrm{dt}}\|\xb(t)-\bar{\xb}^\diamond(t)\|^2
\le -2\sigsecm(\Lb_{\sigma(t)}) K\|\xb(t)-\bar{\xb}^\diamond(t)\|^2+\beta(t),
\]
where $\beta(t)=2\alpha(t)\langle\xb(t)-\bar{\xb}^\diamond(t),\zb_H-\tilde{\Hb}\xb(t)+\mathbf{1}_N\otimes(\frac{1}{2N}\sum\limits_{i=1}^{N}\nabla f_i(\xb_i))\rangle$. Then we select $\sigsecm^\ast=\min\limits_{t\ge 0}\sigsecm(\Lb_{\sigma(t)})$ so that
\[
\frac{\mathrm{d}}{\mathrm{dt}}\|\xb(t)-\bar{\xb}^\diamond(t)\|^2
\le -2\sigsecm^\ast K\|\xb(t)-\bar{\xb}^\diamond(t)\|^2+\beta(t).
\]
Similarly, by Lemma \ref{lem2} and the fact that $\lim\limits_{t\to\infty}\beta(t)=0$, we can conclude $$\lim\limits_{t\to\infty}\|\xb(t)-\bar{\xb}^\diamond(t)\|^2=0,$$
i.e., the system (\ref{eq:x_ori}) achieves a consensus over switching networks.

Next we prove that the consensus value is exactly the least-squares solution of (\ref{eq:linear_equation}). Let $\yb^\ast\in\mathcal{Y}_{\rm LS}$. Recall in (\ref{eq:diff_xbar_yast}) we have
\begin{equation}\label{eq:diff_xbar_yast2}
\frac{\mathrm{d}}{\mathrm{dt}}\|\bar{\xb}(t)-\yb^\ast\|^2 \le  -\frac{\alpha(t)}{N}(f(\bar{\xb}(t))-f(\yb^\ast)) + \omega(t),
\end{equation}
where
\begin{equation}\notag
\begin{aligned}
\omega(t) &= \frac{\alpha(t)}{N}\langle\bar{\xb}(t)-\yb^\ast,\nabla f(\bar{\xb}(t))-\sum\limits_{i=1}^{N}\nabla f_i(\xb_i(t))\rangle\\
&= \frac{\alpha(t)}{N}\langle\bar{\xb}(t)-\yb^\ast, \sum_{i=1}^N\hb_i\hb_i^\top(\xb_i(t)-\bar{\xb}(t))\rangle.
\end{aligned}
\end{equation}
By simple calculation and the fact that $\|\xb(t)\|$ is bounded, it can be obtained that
\begin{equation}\notag
\begin{aligned}
\left|\omega(t)\right| &\le \frac{\alpha(t)}{N}\|\bar{\xb}(t)-\yb^\ast\|\sum_{i=1}^N \|\hb_i\hb_i^\top\|\|\xb_i(t)-\bar{\xb}(t)\|\\
&\le \frac{\alpha(t)\Phi(t)}{N}\|\bar{\xb}(t)-\yb^\ast\|\sum_{i=1}^N \|\hb_i\hb_i^\top\|\\
&=\MO(\alpha(t)\Phi(t)),
\end{aligned}
\end{equation}
where $\Phi(t)=\max\limits_{1\le i,j\le N}\|\xb_i(t)-\xb_j(t)\|$. By Lemma \ref{lem3}
$$\int_0^\infty\left|\omega(t)\right|\mathrm{dt}<\infty,$$
which implies
$$\int_0^\infty\omega(t)\mathrm{dt}<\infty.$$
Note that the constantly connected graph
considered in this theorem is clearly uniformly jointly connected. Based on (\ref{eq:diff_xbar_yast2}), we have
\begin{equation}\label{eq:9}
\begin{aligned}
\frac{1}{N}\int_0^t\alpha(s)(f(\bar{\xb}(s))-f(\yb^\ast))\mathrm{ds}\le \|\bar{\xb}(0)-\yb^\ast\|^2-\|\bar{\xb}(t)-\yb^\ast\|^2 + \int_0^t\omega(s)\mathrm{ds}.
\end{aligned}
\end{equation}
Since $\xb(t)$ is bounded and $\int_0^\infty\omega(t)\mathrm{dt}<\infty$, the right-hand side of (\ref{eq:9}) is less than infinity, which implies
$$\int_0^\infty\alpha(s)(f(\bar{\xb}(s))-f(\yb^\ast))\mathrm{ds}<\infty.$$
Since $\int_0^\infty\alpha(s)\mathrm{ds}=\infty$, $\liminf\limits_{s\to\infty}(f(\bar{\xb}(s))-f(\yb^\ast))=0$. Since the states $\xb_i(t)$ for all $i$ are bounded, we can find a sequence $\{s_k\}_{k\ge 0}$ such that
$$\lim\limits_{k\to\infty}f(\bar{\xb}(s_k))=f(\yb^\ast).$$
By Bolzano-Weierstrass theorem, we select $\{s_{k_r}\}_{r\ge 0}$ as a subsequence of $\{s_k\}_{k\ge 0}$ such that $\lim\limits_{r\to\infty}\bar{\xb}(s_{k_r})=\hat{\yb}$ for some $\hat{\yb}$. It is obvious that $f(\hat{\yb})=f(\yb^\ast)$, i.e. $\hat{\yb}\in\mathcal{Y}$ is also an optimal solution. Moreover, by replacing $\yb^\ast$ with $\hat{\yb}$ in (\ref{eq:diff_xbar_yast2}), we have {by the convexity of the function $f$}
\begin{equation}\label{eq:10}
\frac{\mathrm{d}}{\mathrm{dt}}\|\bar{\xb}(t)-\hat{\yb}\|^2 \le \omega(t) \le \left|\omega(t)\right|.
\end{equation}
In order to prove by contradiction that $\|\bar{\xb}(t)-\hat{\yb}\|^2$ is convergent, we suppose, by the boundedness of $\bar{\xb}(t)$, that there exist sequences $\{t_{s_k}\},\{t_{r_k}\}$ satisfying that
\begin{equation}\notag
\begin{aligned}
l_1 &:= \lim\limits_{k\to\infty}\|\bar{\xb}(t_{s_k})-\hat{\yb}\|^2\\
l_2 &:= \lim\limits_{k\to\infty}\|\bar{\xb}(t_{r_k})-\hat{\yb}\|^2,
\end{aligned}
\end{equation}
respectively and $l_1\neq l_2$. We also assume, without loss of generality, $l_1-l_2=\epsilon_0>0$. Then by (\ref{eq:10}) we have
$$l_1-l_2=\lim\limits_{k\to\infty}\int_{t_{r_k}}^{t_{s_k}}\frac{\mathrm{d}}{\mathrm{dt}}\|\bar{\xb}(t)-\hat{\yb}\|^2\mathrm{dt}\le \lim\limits_{k\to\infty}\int_{t_{r_k}}^{t_{s_k}}\left|\omega(t)\right|\mathrm{dt}.$$
Since $\int_0^\infty\left|\omega(t)\right|\mathrm{dt}<\infty$ as proved above, it can be concluded that
$$\lim\limits_{k\to\infty}\int_{t_{r_k}}^{t_{s_k}}\left|\omega(t)\right|\mathrm{dt}=0,$$
i.e., there exists $k_0>0$ such that $\int_{t_{r_k}}^{t_{s_k}}\left|\omega(t)\right|\mathrm{dt}<\epsilon_0$ for all $k>k_0$. This implies $l_1-l_2<\epsilon_0$, which is contradictory to the assumption that $l_1-l_2=\epsilon_0$. Hence $\|\bar{\xb}(t)-\hat{\yb}\|^2$ is convergent. Since it has been shown that there exists a sequence $\{s_r\}_{r\ge 0}$ such that $\lim\limits_{r\to\infty}\bar{\xb}(s_r)=\hat{\yb}$, we have $\lim\limits_{t\to\infty}\|\bar{\xb}(t)-\hat{\yb}\|^2=0$. Due to the fact that the network achieves a consensus, there holds
$$\lim\limits_{t\to\infty}\xb_i(t)=\hat{\yb}$$
for all $i\in\mathcal{V}$.

\subsection{Proof of Theorem \ref{thm:3}}
By the boundedness of states and the Proposition 4.10 in \cite{Shi2013Robust},
we know that the network achieves a consensus.
Based on the hypothesis of boundedness of states and the consensus result,
we can show this theorem by the similar arguments in Theorem \ref{thm:2}.

\section{Numerical Examples}\label{sec:sims}
In this section, several numerical examples are provided to validate the results of Theorem \ref{thm:1}, \ref{thm:2}.
\subsection{Fixed Graphs}
{{{
\noindent{\bf Example 1.} Consider a $4$-node path graph $\mathcal{G}_{\rm ring}$, over which we study two linear algebraic equations with respect to $\yb\in\R^2$:
\begin{align}
\textnormal{(LE. 1) }\begin{bmatrix}
1 & 1\\
1 & 2.3\\
-0.5 & 0.8\\
0.8 & 0.2
\end{bmatrix}
\yb
&=
\begin{bmatrix}
1\\ 3\\ 2\\ -1
\end{bmatrix},\notag\\
\textnormal{(LE. 2) }\ \begin{bmatrix}
2 & 7\\
6 & 5\\
-11 & 1\\
1 & 0
\end{bmatrix}\
\yb\
&=
\begin{bmatrix}
1\\ 3\\ 2\\ -1
\end{bmatrix}.\notag
\end{align}
Both (LE. 1) and (LE. 2) yield unique least-squares solutions $\yb^\ast_1=[-1.218\ 1.869]^\top,\ \yb^\ast_2=[-0.092\ 0.361]^\top$, respectively. The resulting $\frac{2\sigm(\Hb^\top\Hb)}{N}$ values for (LE. 1) and (LE. 2) are
$$\bigg(\frac{\sigm(\Hb^\top\Hb)}{N}\bigg)_1=0.313,\ \bigg(\frac{\sigm(\Hb^\top\Hb)}{N}\bigg)_2=15.975,$$
respectively. We also introduce another equation (LE. 3) by multiplying the left-hand side of (LE. 1) with $1.7872$ so that
$$\bigg(\frac{\sigm(\Hb^\top\Hb)}{N}\bigg)_3=1.$$
With $K=100$ and some randomly chosen initial conditions $\xb(0)$,
we run the algorithm (\ref{eq:x_ori}) with $\alpha(t)=\frac{1}{t+1}$ and then plot the trajectories of
\begin{equation}\notag
\begin{aligned}
e_1(t)&:=\big\|\sum_{i=1}^4 \xb_i(t)/4-\yb^\ast_1\big\|\\
e_2(t)&:=\big\|\sum_{i=1}^4 \xb_i(t)/4-\yb^\ast_2\big\|\\
e_3(t)&:=\big\|\sum_{i=1}^4 \xb_i(t)/4-\frac{\yb^\ast_1}{1.7872}\big\|\cdot\big(\log (t+1)\big)^{-1}
\end{aligned}
\end{equation}
in logarithmic scales in Figure \ref{fig:eg1}. As can be seen, each $\xb_i(t)$ converges to $\yb^\ast$, which is consistent with the claim of Theorem \ref{thm:1}. Further, according to the trajectories in Figure \ref{fig:eg1}, we directly calculate the slopes
$$\kappa_1=-0.313,\ \kappa_2=-0.997,\ \kappa_3=-1.040$$
for (LE. 1), (LE. 2) and (LE. 3), which implies
$$e_1(t)=\MO(\frac{1}{t^{0.313}}),\ e_2(t)=\MO(\frac{1}{t^{0.997}}),\ e_3(t)=\MO(\frac{1}{t^{1.040}}).$$
This validates the statement of Theorem \ref{thm:convergence_rate} when $\alpha(t)=\MO(\frac{1}{t})$, where the bounds of $e_1(t)$ and $e_2(t)$ are as predicted as Theorem \ref{thm:convergence_rate} (i)(a), and that of $e_3(t)$ is consistent with Theorem \ref{thm:convergence_rate} (i)(b).
}


\begin{figure}
\centering
\includegraphics[width=3.3in]{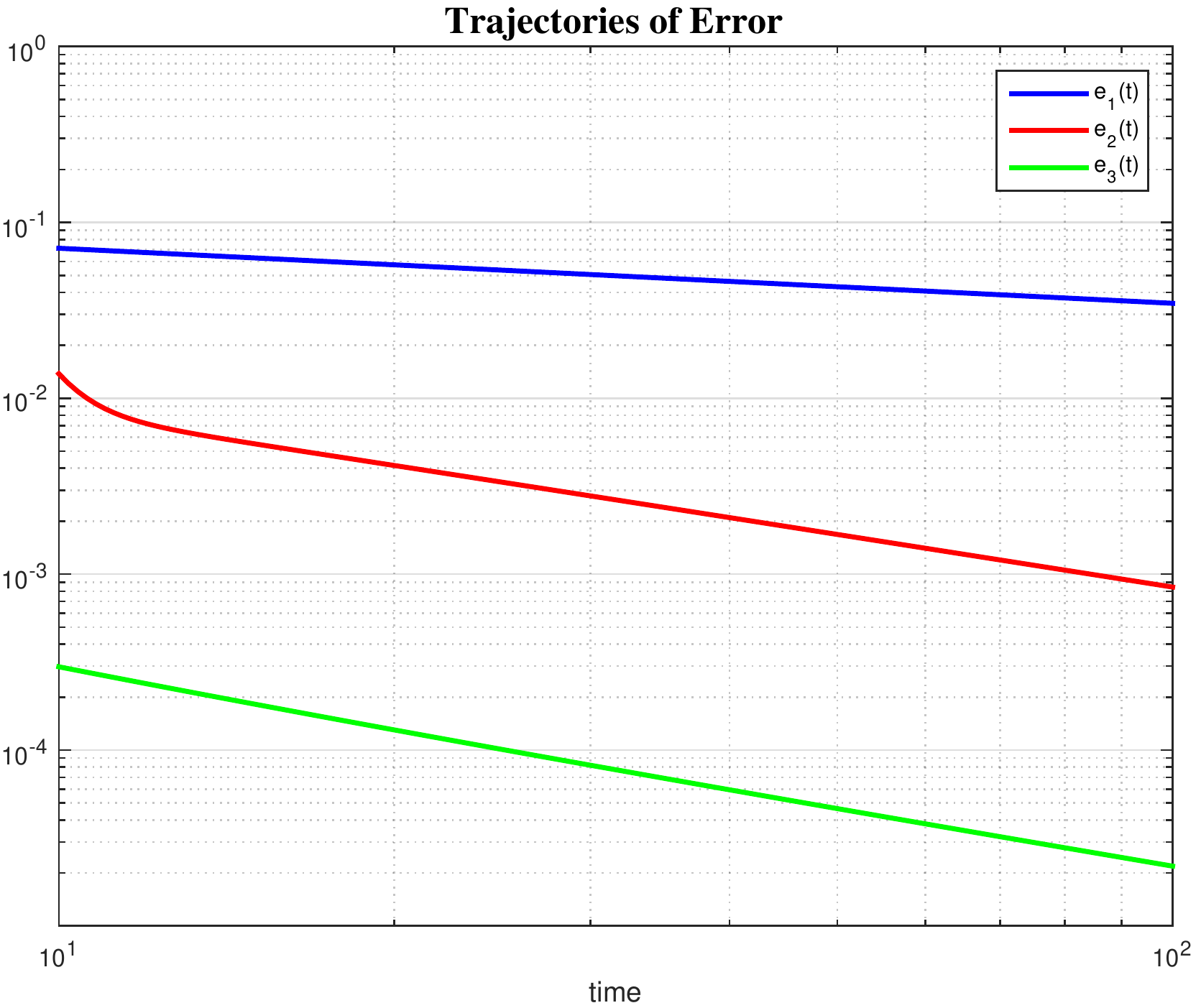}
\caption{The trajectories of $e_j(t):=\big\|\sum\limits_{i=1}^4 \xb_i(t)/4-\yb^\ast_j\big\|^2,\ j=1,2$ and $e_3(t):=\big\|\sum_{i=1}^4 \xb_i(t)/4-\frac{\yb^\ast_1}{1.7872}\big\|\cdot\big(\log (t+1)\big)^{-1}$ with $\yb^\ast_1=[-1.218\ 1.869]^\top$ and $\yb^\ast_2=[-0.092\ 0.361]^\top$ for $\alpha(t)=\frac{1}{t+1}$. The slopes are $\kappa_1=-0.313, \kappa_2=-0.997, \kappa_3=-1.040$.}
\label{fig:eg1}
\end{figure}

\medskip

\noindent{\bf Example 2.} Consider the linear equation (LE. 1) with the same $\xb(0)$ and $K$ as in Example 1. We run the algorithm (\ref{eq:x_ori}) on $\mathcal{G}_{\rm ring}$ for $\alpha(t)=\frac{1}{(t+1)^{0.75}}$, $\alpha(t)=\frac{1}{(t+1)^{0.5}}$ and  $\alpha(t)=\frac{1}{(t+1)^{0.25}}$, under which we plot in Figure \ref{fig:eg2} the trajectories of
$$e(t):=\big\|\sum\limits_{i=1}^4 \xb_i(t)/4-\yb^\ast_2\big\|.$$
By direct calculation, we find
$$e(t)=\MO(\frac{1}{t^{0.750}}),\ e(t)=\MO(\frac{1}{t^{0.492}}),\ e(t)=\MO(\frac{1}{t^{0.249}})$$
for $\alpha(t)=\frac{1}{(t+1)^{0.75}},\frac{1}{(t+1)^{0.5}},\frac{1}{(t+1)^{0.25}}$, respectively. These results validate the statement in Theorem \ref{thm:convergence_rate} for the step size $\alpha(t)=\MO(\frac{1}{t^\lambda}),\ \lambda\in(0,1)$.

\begin{figure}
\centering
\includegraphics[width=3.3in]{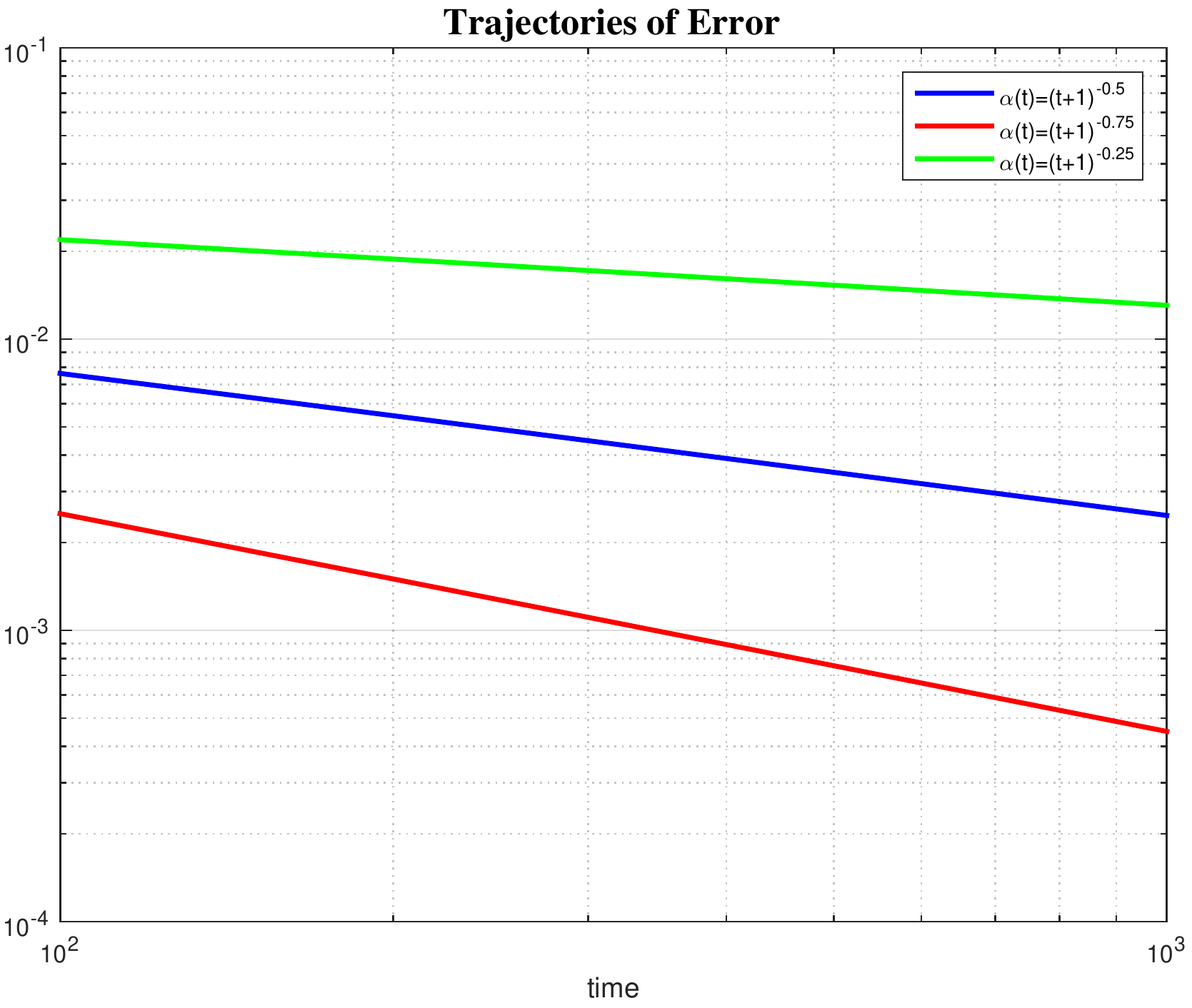}
\caption{The trajectories of $e(t):=\big\|\sum\limits_{i=1}^4 \xb_i(t)/4-\yb^\ast_2\big\|$ with $\yb_2^\ast=[-0.092\ 0.361]^\top$ for $\alpha(t)=\frac{1}{(t+1)^{0.75}},\ \alpha(t)=\frac{1}{(t+1)^{0.5}}$ and  $\alpha(t)=\frac{1}{(t+1)^{0.25}}$, respectively. The slopes are $-0.750,-0.492,-0.249$, respectively.}
\label{fig:eg2}
\end{figure}

}

\subsection{Switching Connected Graphs}

\noindent{\bf Example 3.} Consider the following linear equation with respect to $\yb\in\R^2$:
\[
\begin{bmatrix}
4 & -2\\
2 & -1\\
3 & -1.5\\
-1.5 & 0.75\\
1 & -0.5
\end{bmatrix}
\yb
=
\begin{bmatrix}
1\\
3\\
2\\
3\\
-2
\end{bmatrix}.
\]
We can easily check that the conditions of Theorem \ref{thm:2} are satisfied, in particular, $\rank(\Hb)=1<2$, which means the linear equation has non-unique least-squares solutions. Let $\mathcal{Q}^\ast=\{\mathcal{G}_1,\mathcal{G}_2\}$ with $\mathcal{G}_1,\mathcal{G}_2$ as shown in Figure \ref{fig:l1l2} and $\mathcal{G}_{\sigma(t)}$ be given as following:
$$\mathcal{G}_{\sigma(t)}=\left\{
\begin{aligned}
& \mathcal{G}_1,\ t\in\big [T k,T (k+1)\big ), k=0,2,4,\dots \\
& \mathcal{G}_2,\ t\in\big [T k,T (k+1)\big ), k=1,3,5,\dots \\
\end{aligned}
\right.
$$
with $T=0.1$,
i.e., the network switches between graph $\mathcal{G}_1$ and $\mathcal{G}_2$ periodically with period $T=0.1$. Set the initial value $\xb(0)=[3.5\ 4\ 5\ -4\ -4\ 3\ -2\ -3.4\ -5\ 4.5]^\top$. Let the flow (\ref{eq:x_ori}) do iteration over the switching network $\mathcal{G}_{\sigma(t)}$ with $K=100, \alpha(t)=(t+1)^{-1}$. Then the trajectories of $\xb_i[1](t),\xb_i[2](t)$ with $i=1,2,3,4,5$ are plotted in blue in Figure \ref{fig:convswitching_notfullrank}, from which it can seen that $\xb_i(t)$ for all $i$ converge to $\hat{\yb}_1=[-0.1925\ 0.9737]^\top$. Next we reset the initial value as $\xb(0)=[-2\ 1.25\ -3\ 2\ 1\ 3\ 1.3\ 0.8\ -0.8\ 3.5]^\top$ and plot the states trajectories in red in Figure \ref{fig:convswitching_notfullrank}, and the new limit turns to be $\hat{\yb}_2=[-0.7491\ 2.0854]^\top$. Evidently, $\hat{\yb}_1$ and $\hat{\yb}_2$ are two different least-squares solutions and this simulation result is consistent with the claim of Theorem \ref{thm:2}. It also implies that, unsurprisingly, the initial values determine the value of the nonunique least-squares solution that the system state converges to.



\begin{figure} [htbp]
\hspace{5mm}
\subfigure[$\mathcal{G}_1$]{
\begin{minipage}{0.41\linewidth}
\centering
\includegraphics[width=3.5cm]{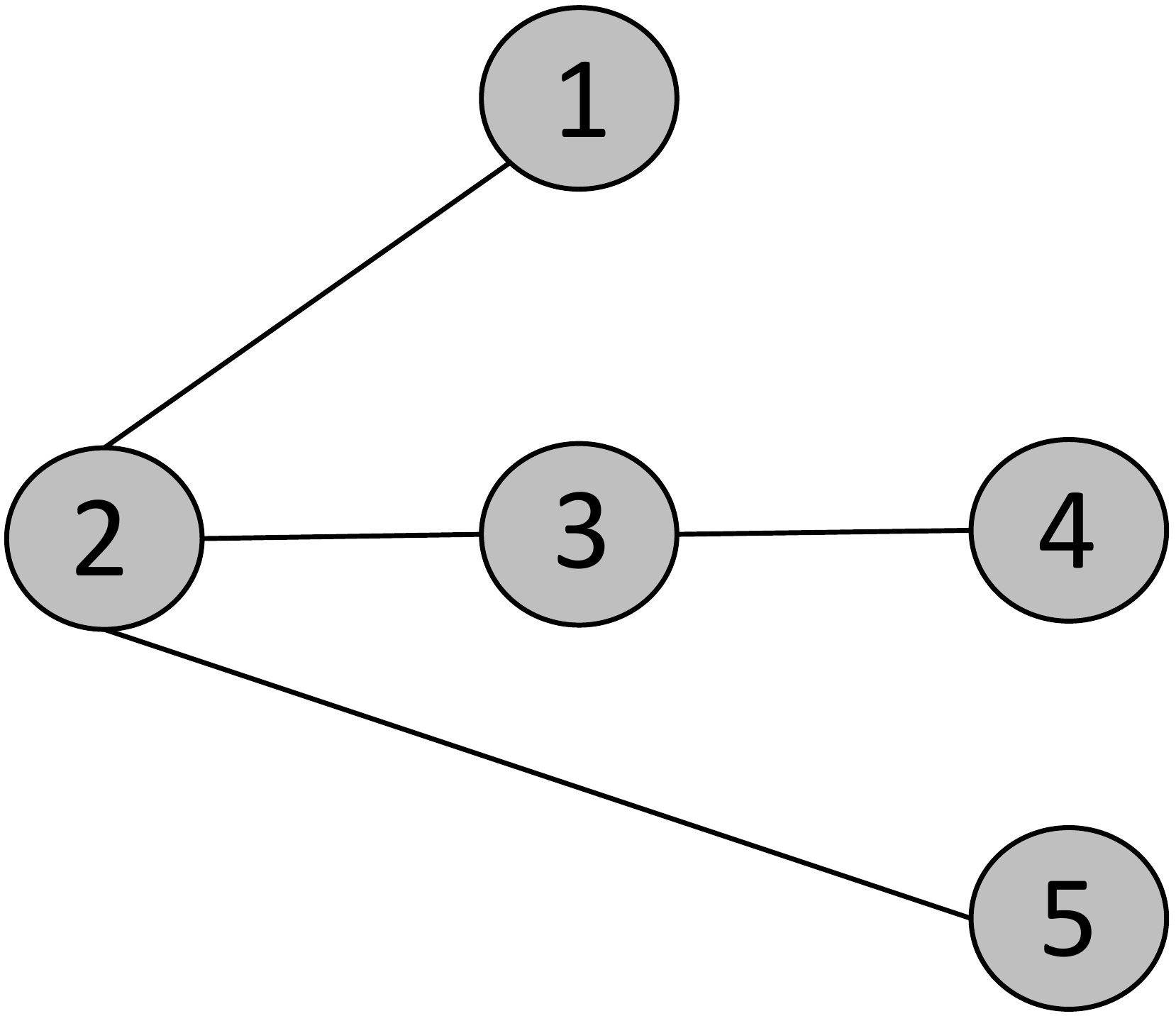}
\end{minipage}
}
\subfigure[$\mathcal{G}_2$]{
\begin{minipage}{0.44\linewidth}
\centering
\includegraphics[width=3.5cm]{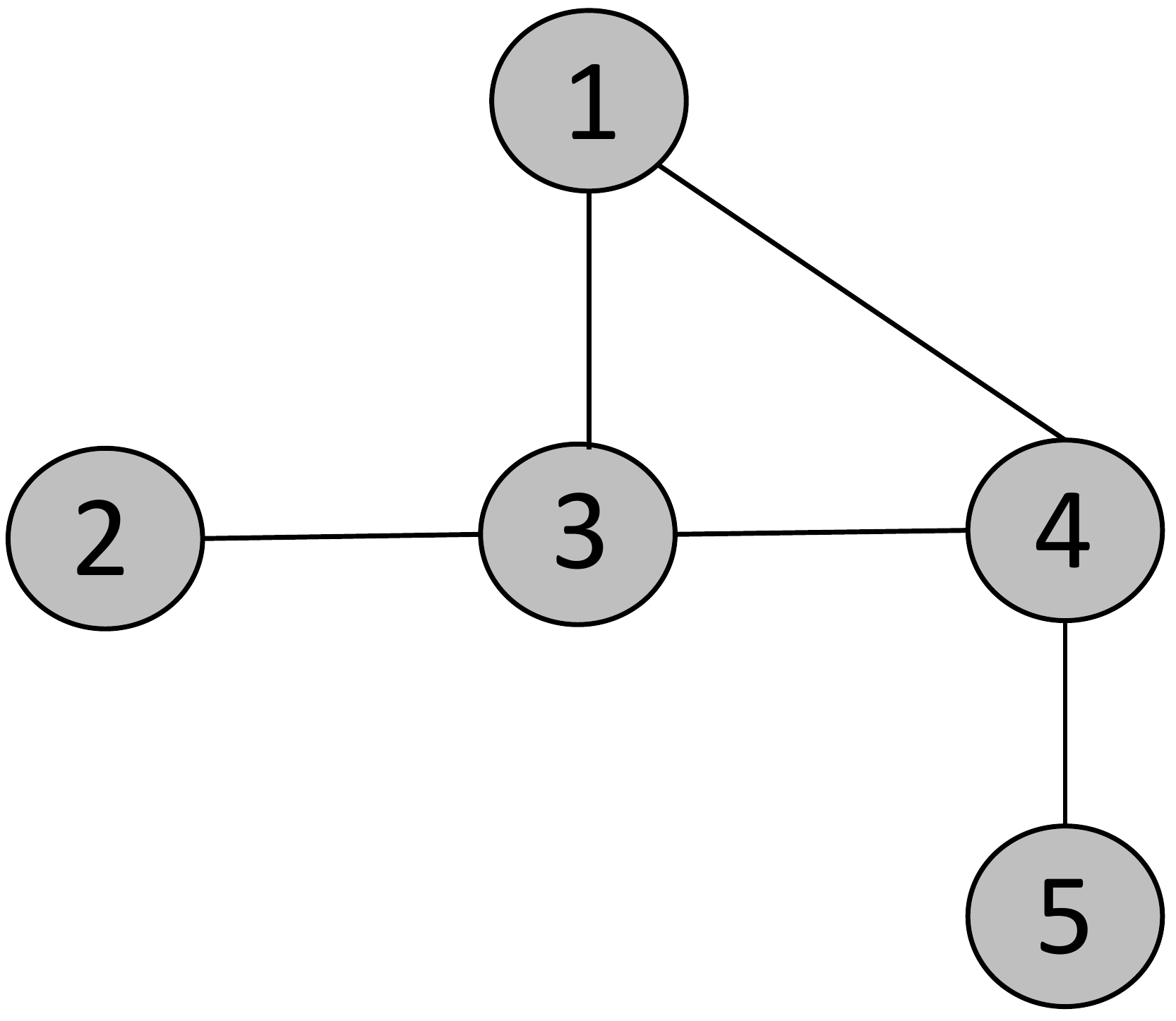}
\end{minipage}
}
\caption{Constant, connected and undirected graph $\mathcal{G}_1$, $\mathcal{G}_2$ considered in Example 3 and 4.}
\label{fig:l1l2}
\end{figure}


\begin{figure}
\centering
\includegraphics[width=3.4in]{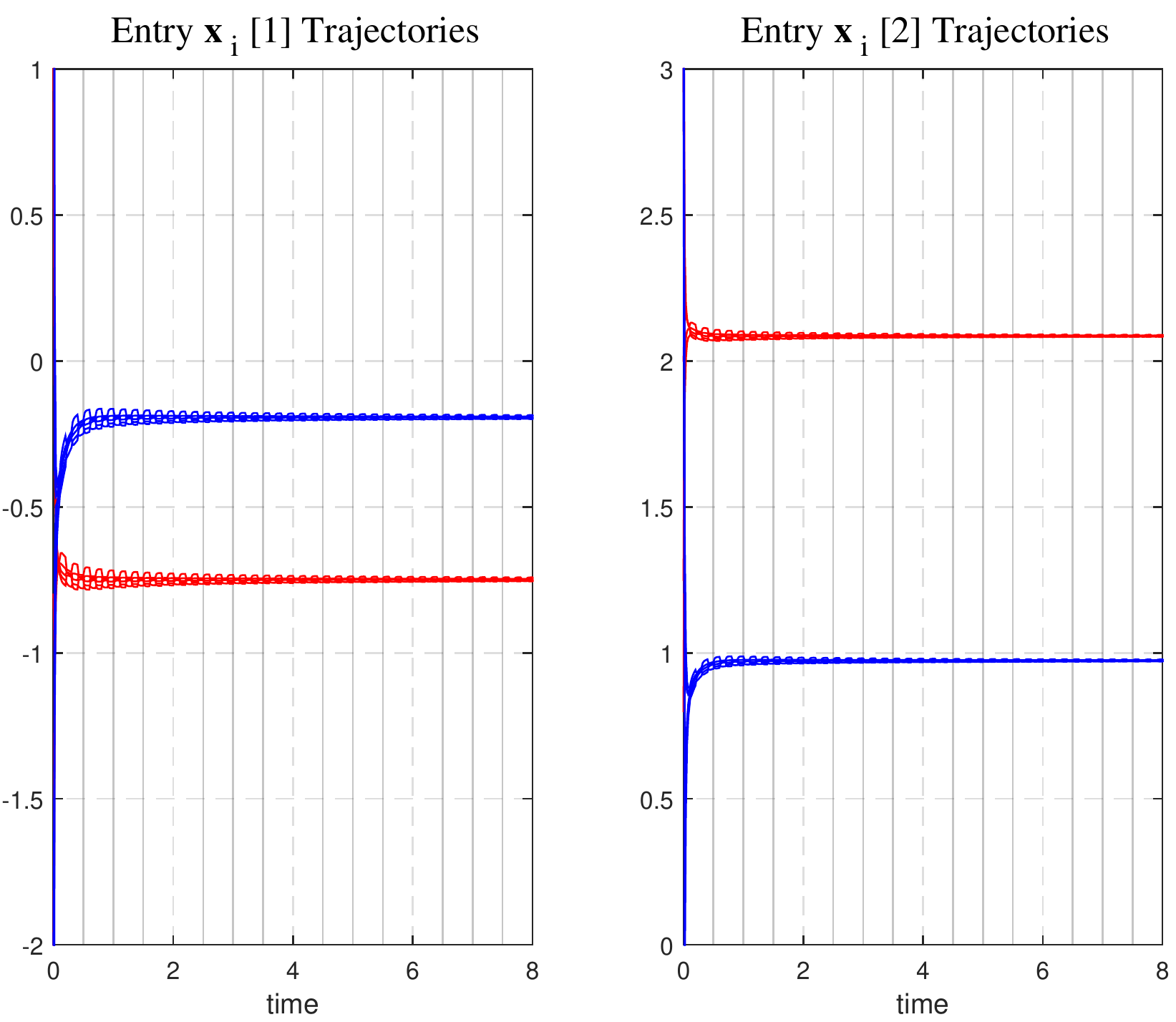}
\caption{The trajectories of the first component $\xb_i[1](t)$ and the second component $\xb_i[2](t)$ for $i=1,2,3,4,5$ given $K=100$, $\alpha(t)=(t+1)^{-1}$ obtained over a switching network with two different sets of initial values. As calculated, all $\xb_i(t)$ in blue converge to $\hat{\yb}_1=[-0.1925\ 0.9737]^\top$ and all $\xb_i(t)$ in red converge to $\hat{\yb}_2=[-0.7491\ 2.0854]^\top$, which are two different least-squares solutions.}
\label{fig:convswitching_notfullrank}
\end{figure}


\subsection{Switching Graphs with Joint Connectivity}
\noindent{\bf Example 4.}
Consider the following same linear equation as in Example 4. Let $\mathcal{G}_{\sigma(t)}$ be given as following:
$$\mathcal{G}_{\sigma(t)}=\left\{
\begin{aligned}
& \mathcal{G}_3,\ t\in\big [T k,T (k+1)\big ), k=0,2,4,\dots \\
& \mathcal{G}_4,\ t\in\big [T k,T (k+1)\big ), k=1,3,5,\dots \\
\end{aligned}
\right.
$$
with $\mathcal{G}_3$, $\mathcal{G}_4$ in Figure \ref{fig:l3l4}, $T=0.1$. We can see that neither $\mathcal{G}_3$ nor $\mathcal{G}_4$ is connected, but $\mathcal{G}_{\sigma(t)}$ is uniformly connected. Given the same $K,\alpha(t), \xb(0)=[3.5\ 4\ 5\ -4\ -4\ 3\ -2\ -3.4\ -5\ 4.5]^\top$ as Example 3. Let the flow (\ref{eq:x_ori}) do iteration over $\mathcal{G}_{\sigma(t)}$. Then we plot the trajectories of $\xb_i[1](t),\xb_i[2](t)$ for all $i$ in Figure \ref{fig:convswitching_fullrank_uniconnected}. It can be seen that $\xb_i(t)$ converge to $\yb^\ast=[-0.1925\ 0.9737]^\top$ for all $i$ when $\rank(\Hb)=m$, which is consistent with Theorem \ref{thm:3}. We can also verify the convergence for the case with $\rank(\Hb)<m$.



\begin{figure} [htbp]
\hspace{5mm}
\subfigure[$\mathcal{G}_3$]{
\begin{minipage}{0.41\linewidth}
\centering
\includegraphics[width=3.5cm]{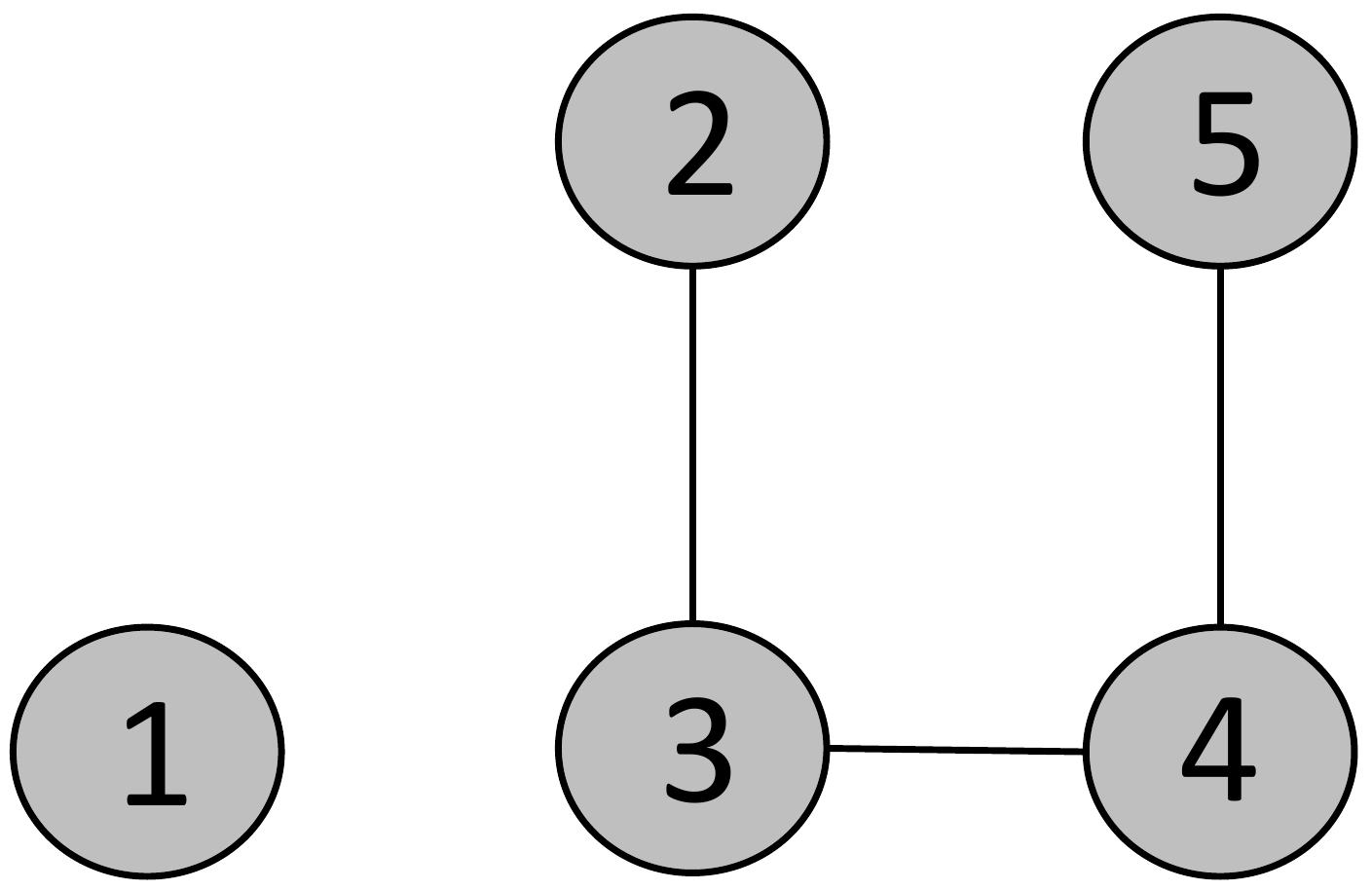}
\end{minipage}
}
\subfigure[$\mathcal{G}_4$]{
\begin{minipage}{0.44\linewidth}
\centering
\includegraphics[width=3.5cm]{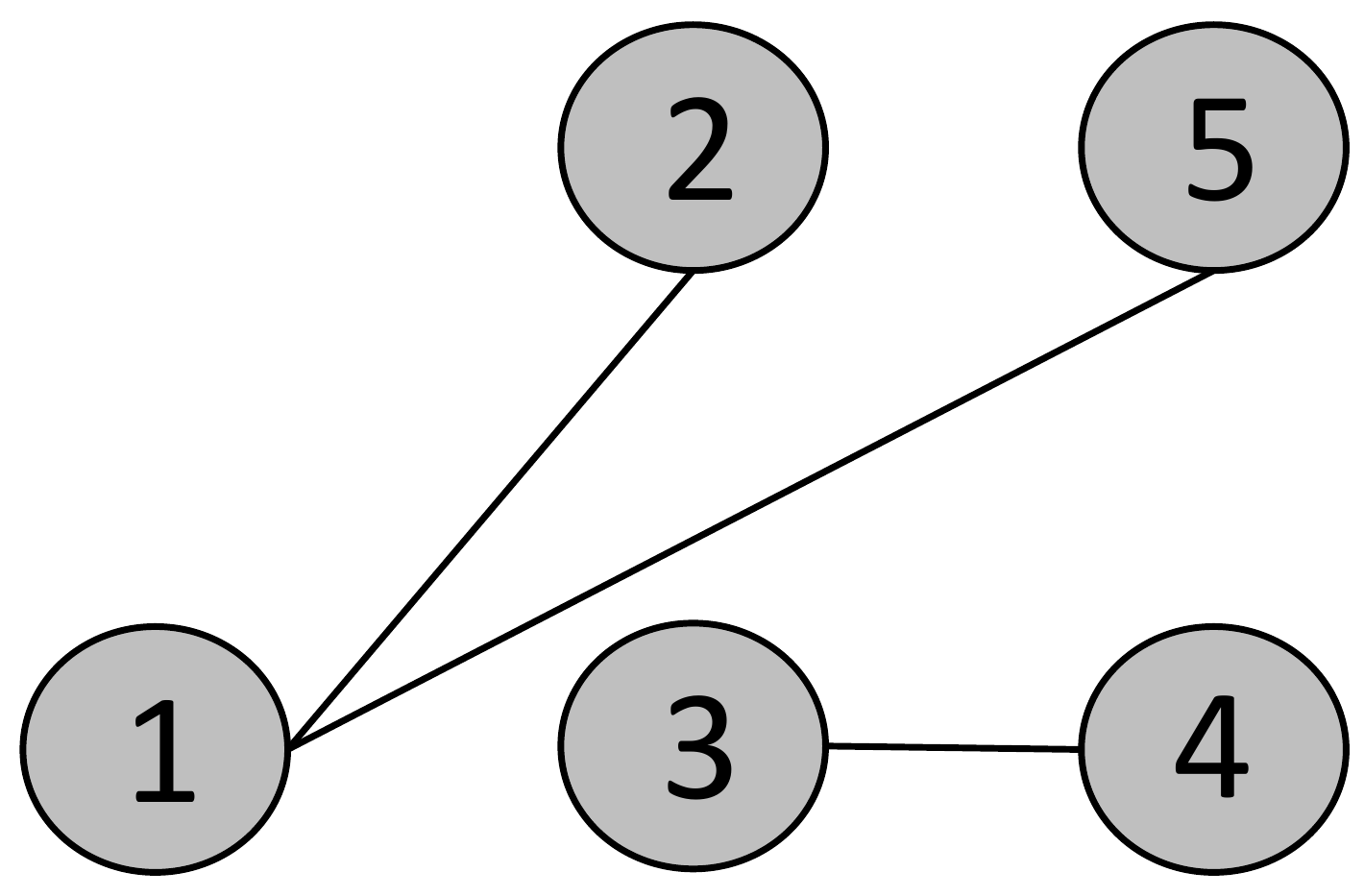}
\end{minipage}
}
\caption{Constant, connected and undirected graph $\mathcal{G}_3$, $\mathcal{G}_4$ considered in Example 5.}
\label{fig:l3l4}
\end{figure}

\begin{figure}
\centering
\includegraphics[width=3.4in]{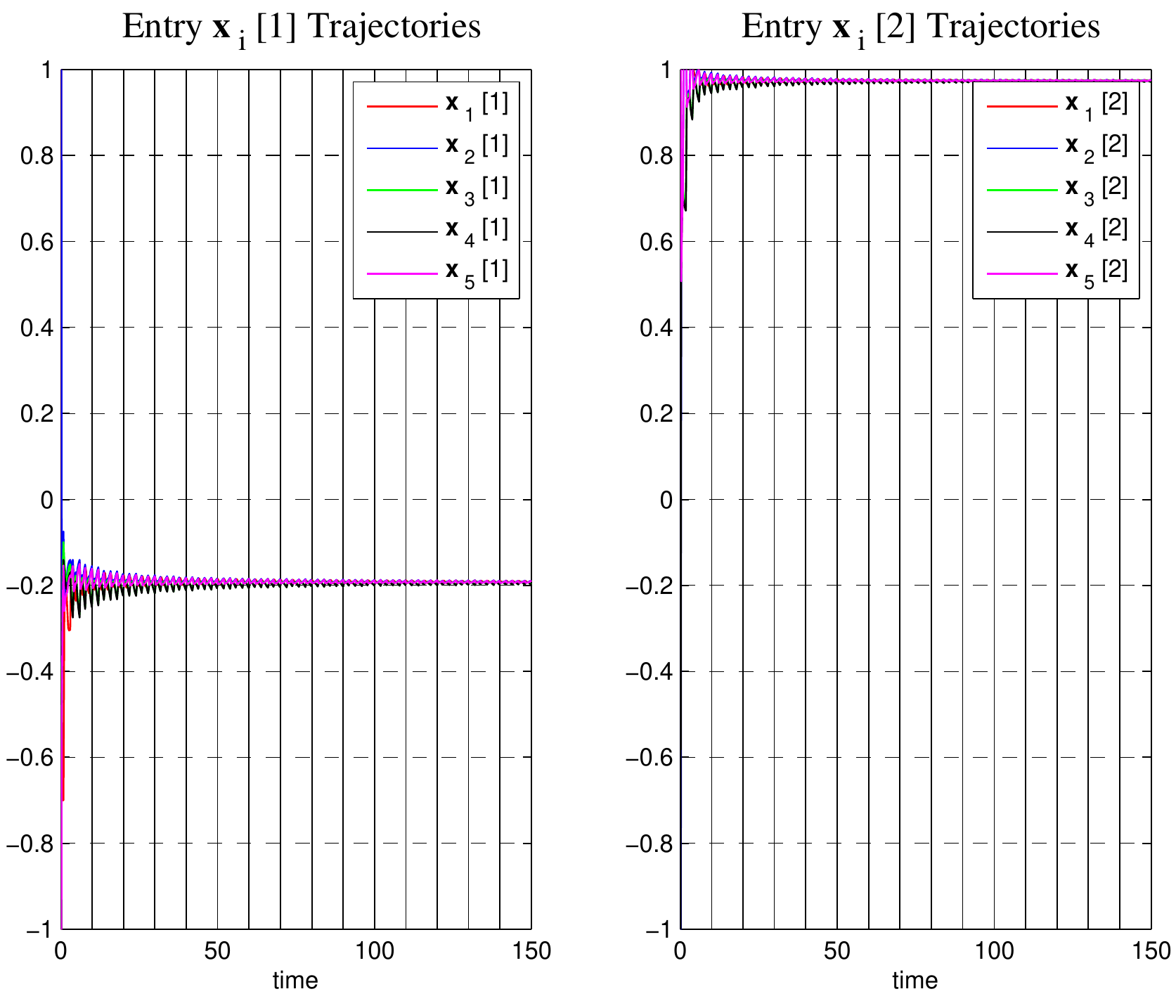}
\caption{The trajectories of the first component $\xb_i[1](t)$ and the second component $\xb_i[2](t)$ for $i=1,2,3,4,5$ given $K=100$, $\alpha(t)=(t+1)^{-1}$ obtained over a switching network with connected graph union. It can be seen that all $\xb_i(t)$ converge to $\yb^\ast=[-0.1925\ 0.9737]^\top$.}
\label{fig:convswitching_fullrank_uniconnected}
\end{figure}

\section{Conclusions}\label{sec:conclu}
In this paper, a first-order distributed {continuous-time} least-squares solver over networks was proposed. When the least-squares solution is unique, we proved the convergence results for fixed and connected graphs with an assumption of nonintegrable step size. {We also carefully analyzed the bound of convergence speed for two classes of step size choices, which provides guidance on the selection of step size to secure the fastest convergence speed.} By loosening the requirement for uniqueness of the least-squares solution and assuming square integrability on step size, we obtained convergence results for a constantly connected switching graph, and for uniformly jointly connected graphs under a boundedness assumption of system states. We also provided some numerical examples, in order to verify the results and illustrate the convergence speed. Potential future work includes proving the convergence over networks without instantaneous connectivity, studying the exact convergence rate, and finding out the convergence limit.

\end{document}